\documentclass[twocolumn]{autart}

\usepackage{amsmath} 
\usepackage{amssymb}  
\usepackage{graphicx}
\usepackage{xcolor}
\usepackage{algorithmic}
\usepackage{algorithm}
\usepackage{accents}
\usepackage{enumerate}
\usepackage{nicefrac}
\usepackage{afterpage}
\usepackage{url}
\newcommand{\ubar}[1]
{\underaccent{\bar}{#1}}
\newcommand{\R}{\mathcal{R}}
\newcommand{\B}{\mathcal{B}}
\newcommand{\nbd}[2]{\mathcal{N}^\beta_{#1}(#2)}

\newtheorem{theorem}{Theorem}

\newtheorem{lemma}{Lemma}
\newtheorem{conjecture}{Conjecture}
\newtheorem{proposition}{Proposition}
\newtheorem{definition}{Definition}
\newtheorem{example}{Example}
\newtheorem{remark}{Remark}
\newcommand{\A}{\mathcal{A}
}
\newcommand{\hide}[1]{}
\renewcommand{\a}{a}
 \newcommand{\eop}{{\hfill $\blacksquare$}}

\renewcommand{\P}[2]{\mathcal{P}_{#2}(#1)}

\definecolor{mygreen}{rgb}{0.1, 0.7, 0.1}
\definecolor{mybrown}{rgb}{0.74, 0.27, 0.07}

\newif\ifshowrevisioncolors
\showrevisioncolorsfalse   

\newcommand{\revcolor}[2]{%
  \ifshowrevisioncolors
    \textcolor{#1}{#2}%
  \else
    #2%
  \fi
}

\newcommand{\revseven}[1]{\revcolor{blue}{#1}}
\newcommand{\revten}[1]{\revcolor{mygreen}{#1}}
\newcommand{\reveleven}[1]{\revcolor{red}{#1}}
\newcommand{\revother}[1]{\revcolor{orange}{#1}}

\begin{document}

\begin{frontmatter}

\title{Truncated Noisy Best-Response Algorithms:\\ Toward Game Theoretic Learning with Safety Guarantees\thanksref{footnoteinfo}} 

\thanks[footnoteinfo]{This paper was not presented at any IFAC 
meeting; it extends work originally presented in~\cite{Singh2024}.
Research was sponsored by the Air Force Office of Scientific Research under award number FA9550-23-1-0171 and by the Army Research Office under grant number W911NF-25-1-0239. The views and conclusions contained in this document are those of the authors and should not be interpreted as representing the official policies, either expressed or implied, of the AFOSR, Army Research Office, or the U.S. Government. The U.S. Government is authorized to reproduce and distribute reprints for Government purposes notwithstanding any copyright notation herein.
Corresponding author P.~N.~Brown. Tel. +1-719-255-3332.}

\author[BY]{Vartika Singh}\ead{singh.vsvartika@gmail.com},    
\author[COS]{Philip N. Brown}\ead{pbrown2@uccs.edu},               

\address[BY]{Blue Yonder, India}
\address[COS]{University of Colorado Colorado Springs, CO 80918, USA}             

\begin{abstract}

We consider a game theoretic approach to solve multi-agent coordination problems with submodular maximization objectives.
It is known for such problems that the Nash equilibria for the corresponding game are always within 50\% of the optimal, but that the equilibria which achieve this worst-case bound are not stable.
To exploit this instability, we propose a family of algorithms which we call \emph{Truncated Noisy Best-Response (TNBR) Algorithms}. 
These algorithms are flexibly characterized by agents asynchronously and stochastically selecting actions from a neighbourhood of their best response payoffs.
We compute bounds on the recurrent classes of TNBR algorithms' associated Markov chains.
Our bounds fall into two categories: first, ``Performance'' bounds ensure that TNBR algorithms \emph{always} have a high-value recurrent state; second, ``Safety'' bounds ensure that TNBR algorithms \emph{never} have arbitrarily-bad recurrent states.
Furthermore, these two types of bounds are linked by a waterbed-like effect: every game with a poor Safety guarantee necessarily has a favorable Performance guarantee. 

\end{abstract}

\end{frontmatter}
\section{INTRODUCTION}

Multi-agent coordination problems have a wide range of applications such as task assignment, resource sharing, channel access control in wireless networks, coverage optimization, network routing, and others (see \cite{Brown,Ferguson2021,Kordonis,liu,Qu}). 
The goal is to maximize a system objective controlled by the actions of all the agents. A natural approach to solve such problems uses game theory (see \cite{marden,martin}) --- a system planner endows the agents with individual utility functions and decision rules that depend only on their local information. The system planner can design the local utility functions in order to drive the equilibria of the game toward a desired system optimal state~(e.g., \cite{Singh2025}). 

It is well-understood that if agent utility functions are selected carefully (i.e., using the \emph{marginal contribution utility design}~\cite{marden}), the resulting game is a \emph{potential game} with the system objective as the potential function (see e.g. \cite{marden_potential}), and it is known that these equilibria can be learned efficiently by distributed algorithms~\cite{Giannakopoulos2024}.
In this case, the system-optimal outcome is also a Nash equilibrium (NE) of the game; however, suboptimal Nash equilibria can exist as well.
An extensive literature has emerged to derive and optimize lower bounds on the quality of these suboptimal equilibria~\cite{Paccagnan2020,PoA}.

When the system objective is submodular maximization and agent payoffs have been designed appropriately, it has been proved that every Nash equilibrium obtains a system objective value within~50\% of optimal~\cite{Vetta2002}.
Furthermore, it is known for these games that a) any NE close to this 50\% bound is not stable, and b) any stable NE obtains significantly more than 50\% of optimal~\cite{Seaton2023a}.
In other words, a stable NE cannot be very bad, and a bad NE cannot be stable. 
This suggests that by adding some noise to agent behaviour, it may be possible to selectively avoid the poor-performing NE without substantially destabilizing the good-performing NE.

In this paper, we extend our previous work in~\cite{Singh2024} and prove positive results in this direction which apply to a large family of distributed algorithms parameterized by a neighbourhood size $\beta\in[0,1]$; we term these \emph{Truncated Noisy Best-Response (TNBR) Algorithms}.
In this family of algorithms, agents compute a best-response ``neighbourhood'' of all actions which obtain a payoff within $\beta$ of their best-response payoff.
The agents asynchronously make randomized selections from these best-response neighbourhoods.
Here, $\beta=0$ resolves to a simple asynchronous best-response Markov chain, for which (possibly suboptimal) NE are absorbing states.
As $\beta$ increases from $0$, one hopes that suboptimal NE become transient while high-performing NE remain recurrent.

In this paper, we study two-player submodular maximization games with appropriate payoffs and system objectives normalized to the interval $[0,1]$.
Our results come in two broad categories: 
\begin{itemize}
    \item \emph{(Performance)} In any such game, every TNBR algorithm has a recurrent action profile with system objective value strictly greater than $\frac{1}{2} + \beta$ (Theorems~\ref{thm_NE},~\ref{thm_abs_state}).
    \item \emph{(Safety)} In any such game and any TNBR algorithm, every recurrent action profile has system objective value at least $\frac{1}{2}-g(\beta)$ where $g(\cdot)$ is a nondecreasing function \revseven{satisfying $g(0)=0$} (Theorems~\ref{thm_abs_state},~\ref{cor_rec_state}).
\end{itemize}

\revseven{
The expressions $\frac{1}{2} + \beta$ and $\frac{1}{2}-g(\beta)$ can be viewed as an extension of classical bounds on worst-case Nash equilibria: when $\beta=0$, both expressions coincide at $\frac12$, which is the known lower bound on the system objective of Nash equilibria in this class of games~\cite{Vetta2002}.
Our work provides a theoretical justification for algorithmic optimism: when agents randomize (locally suboptimally), something \emph{better} than a worst-case equilibrium is visited infinitely often, and arbitrarily-poor states are transient.
}

\revseven{
Note that these types of guarantees on a distributed dynamic update process are rare in the game theory literature.
There are many results on practical equilibrium-seeking dynamics~\cite{Monderer1996}, but these typically guarantee convergence to \emph{some} Nash equilibrium without being able to guarantee convergence to a \emph{good} equilibrium.
If worst-case bounds are known on equilibrium quality (e.g., the Price of Anarchy in submodular maximization games~\cite{Vetta2002}), this represents a safety guarantee without a corresponding performance guarantee.
On the other hand, there are dynamics in games which are known to select only the optimal Nash equilibria (e.g., the logit-response dynamics in potential games~\cite{Alos-Ferrer2010}), but these almost universally do so by making every action profile recurrent.
That is, they represent a performance guarantee without corresponding safety properties.
To our knowledge, our work is the first which provides both.
}

Furthermore, we prove several intriguing subtleties and tradeoffs surrounding the relationship between the Safety and Performance properties, and in many cases an unfavorable Safety bound forces an improvement of the corresponding Performance bound.
For example, Theorem~\ref{thm_abs_state} shows that in a recurrent class which contains {no} system-optimal outcome, if $g(\cdot)$ is very steep (indicating an unfavorable Safety property), then the corresponding Performance bound necessarily improves by the same amount.


%
%
%
We also include a suite of numerical experiments to empirically study the expected value of a specific TNBR algorithm called the Approximate Best Response Algorithm (ABRA).
Here, our experiments indicate that by carefully choosing the specific TNBR algorithm, it is possible to improve upon our baseline bounds substantially.

\begin{figure}
    \centering
    \includegraphics[trim ={3cm 8cm 3cm 9cm}, clip, scale=0.5]{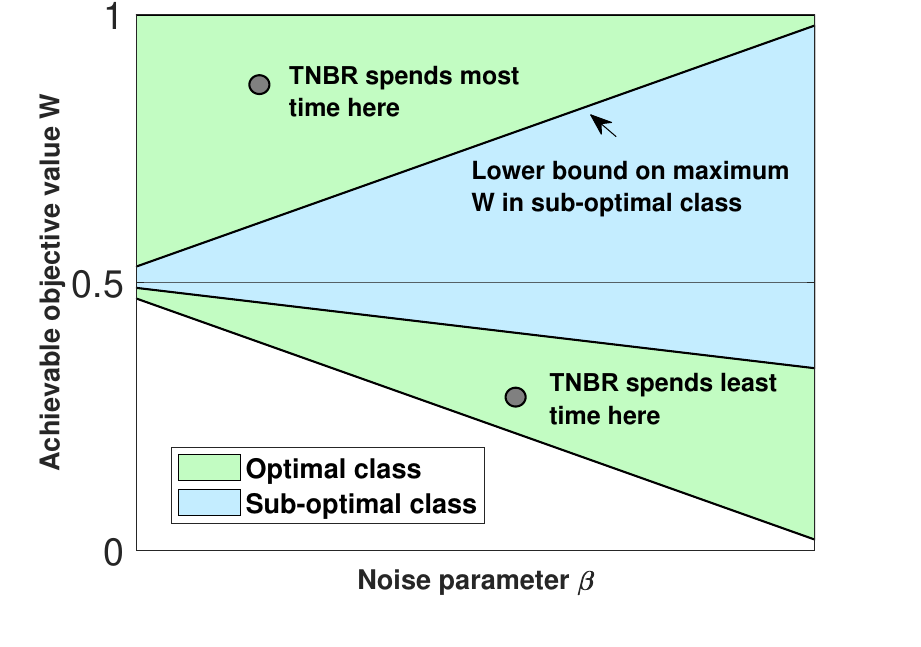}
    \caption{\revten{Conceptual plot of the system objective values of the recurrent states of a TNBR algorithm.
    (i) If an optimal action profile is recurrent (performance), then every action profile below the green region is transient (safety).
    (ii) Otherwise, some action profile above the blue region is recurrent (performance) and every action profile below the blue region is transient (safety).}}
    \label{fig:tradeoff}
\end{figure}

\section{Model}

We consider a game theoretic approach for sub-modular maximization problems. 
A system designer provides two agents with local utility functions and decision rules that depend only on the information available to the agents. 
Given finite sets of elements $E_1$ and $E_2$, agent $i$ has action set $\A_i\subseteq 2^{E_i}$ so that the joint action space is defined by $\A:=\A_1\times\A_2.$
We assume that for each agent $i$, non-participation is feasible; i.e., $\emptyset\in\A_i$.
The optimization problem is specified by a function $W:\A\to[0,1]$.
We assume the function $W$  to be sub-modular, non-decreasing and normalised; that is, for any sets $a, a'$ such that $a \subset a'$ and any element $x\in\A$,  we have
\begin{eqnarray}\label{eqn_sub_mod}
    W(a\cup\{x\}) - W(a) &\ge &  W(a'\cup\{x\}) - W(a'), \nonumber\\
    W(a')\ge W(a) &\mbox{and}& W(\varnothing)=0.
\end{eqnarray}  
Frequently, we slightly abuse notation and write $W(a,b)$ to denote $W((a,b))$ where $a\in\A_1$ and $b\in\A_2$.

Any agent $i$ can observe the action of the other agent but is unaware of their action set $\A_{-i}$, thus cannot directly compute the optimal action profile.  
The system designer aims to design the utility functions and decision rules such that the agent behaviour is driven towards a desirable outcome.  
In this paper, we select the marginal contribution local utility functions (see \cite{marden_potential}):
\begin{equation}\label{eqn_util}
    U_i(a_i,a_{-i}) := W(a_i,a_{-i}) - W(\emptyset,a_{-i}),
\end{equation}
where $a_i$ represents the action of agent $i$ and $a_{-i}$ represents the  action  of  the other agent. It is clear from \eqref{eqn_util} that the local utility of any agent equals the gain in the system utility when that agent chooses to participate. One may anticipate that when agents act as per their self-interest, and try to improve their local utilities, they may also improve the system objective. 

 The marginal contribution utility structure \eqref{eqn_util} induces a potential game among the agents with the system objective function $W$ as the potential function~\cite{marden_potential}. 
 Accordingly, the maximizer of $W$ is always a pure Nash equilibrium (NE) of this game.
 However, the game may have suboptimal NE as well.
 It is shown in~\cite{Vetta2002} that any NE is guaranteed to be within $\frac{1}{2}$ of the optimal for such games.
 Further,~\cite{Seaton2023a} shows that the `bad' NE (the ones with inferior system objective) are not stable: at an unstable NE, none of the agents lose much by switching to the optimal action, i.e., the optimal action is an approximate best response for any agent. Inspired by this, our paper proposes a family of algorithms that leverages the fragility of the `bad' NE in order to escape them by adding a noise parameter $\beta$. We describe the algorithm in the next section.

\reveleven{Before presenting the algorithmic framework, Lemma~\ref{lem_min_entry} states a key structural property of games in our model: at a high-payoff action profile, if one agent can deviate to \emph{any} low-payoff action profile, the other agent necessarily has \emph{only} high-payoff deviations. 
We will use this property repeatedly to prove the paper's main results.}

 \reveleven{\begin{lemma}\label{lem_min_entry}
    If $W$ is sub-modular, non-decreasing and normalised, i.e. satisfies \eqref{eqn_sub_mod}, then for any $\a=(a_1,a_2)$,
    \vspace{-2mm}
    \begin{equation*}
        W(\a)  \le  W(a_1,\tilde{a}_2) +W(\tilde{a}_1,a_2) \mbox{ for all } \tilde{a}_1 \in \A_1,\ \tilde{a}_2 \in \A_2.
    \end{equation*}
\end{lemma}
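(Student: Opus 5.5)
We view $W$ as a set function on the ground set $E_1\sqcup E_2$, so that an action profile $(a_1,a_2)$ corresponds to the set $a_1\cup a_2$. The idea is to compare $W(\a)$ with the sets that pair $a_1$ with $\tilde a_2$ and $\tilde a_1$ with $a_2$. Monotonicity lets us enlarge the right-hand side, and submodularity lets us split $W(a_1\cup a_2)$ into two pieces.

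\emph{Step 1: reduce to empty deviations.} Because $W$ is non-decreasing and $\emptyset\subseteq\tilde a_i$, we have $W(a_1,\tilde a_2)\ge W(a_1,\emptyset)$ and $W(\tilde a_1,a_2)\ge W(\emptyset,a_2)$. So it suffices to prove
\[
W(a_1,a_2)\le W(a_1,\emptyset)+W(\emptyset,a_2).
\]

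\emph{Step 2: use subadditivity.} This is subadditivity of $W$ on the disjoint sets $a_1$ and $a_2$. It follows from submodularity together with normalization. Write $a_2=\{x_1,\dots,x_k\}$ and add these elements one at a time. At step $j$, compare the marginal gain of $x_j$ when added to $a_1\cup\{x_1,\dots,x_{j-1}\}$ with its gain when added to $\{x_1,\dots,x_{j-1}\}$. By \eqref{eqn_sub_mod}, applied with $a=\{x_1,\dots,x_{j-1}\}\subseteq a'=a_1\cup\{x_1,\dots,x_{j-1}\}$, the first gain is at most the second. Summing over $j$ telescopes to
\[
W(a_1\cup a_2)-W(a_1)\le W(a_2)-W(\emptyset)=W(a_2),
\]
where the last equality uses $W(\emptyset)=0$. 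This is exactly the inequality required in Step 1.

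\emph{Main obstacle.} The only delicate point is notational. The paper defines $W$ on $\A$ but states \eqref{eqn_sub_mod} for sets and elements. We therefore need to interpret $W(a_1,a_2)$ as a set function on $E_1\cup E_2$, taking $E_1$ and $E_2$ disjoint or working in the product lattice. We also need intermediate sets such as $a_1\cup\{x_1,\dots,x_{j-1}\}$ to lie in the domain of $W$. I would address this by assuming, as the statement of \eqref{eqn_sub_mod} implicitly does, that $W$ extends to all subsets of $E_1\sqcup E_2$ with the stated properties. Under that assumption the chain argument above is routine.
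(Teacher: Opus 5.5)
Your argument is correct and is essentially the paper's proof: it derives subadditivity $W(a_1,a_2)\le W(a_1,\varnothing)+W(\varnothing,a_2)$ from submodularity and $W(\varnothing)=0$, then uses monotonicity to replace $\varnothing$ with $\tilde a_1,\tilde a_2$. The only difference is that the paper applies diminishing returns to the whole block $a_1$ in one step (base sets $\varnothing\subseteq a_2$), so it evaluates $W$ only on profiles in $\A$; your element-by-element telescoping is the derivation needed if \eqref{eqn_sub_mod} is read literally as a single-element condition, and it costs you the extension assumption you flag.
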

\noindent\textbf{Proof:} From \eqref{eqn_sub_mod},  we have
\begin{eqnarray*}
    W(a_1,a_2) - W(a_2) &\le&W(\varnothing,a_{1}) - W(\varnothing) = W(a_1), \\
    W(\a)  \le W(a_1)+W(a_2) &\le & W(a_1,\tilde{a}_2) +W(\tilde{a}_1,a_2).
\end{eqnarray*} In the above, the first inequality follows since $W$ is sub-modular. The second inequality follows since $W(\varnothing)=0$, and the last inequality follows from the non-decreasing nature of $W$, \revseven{completing the proof.} \eop}

\hide{\begin{lemma}\label{lem_min_entry}
    If $W$ is sub-modular, non-decreasing and normalised, i.e. satisfies \eqref{eqn_sub_mod}, then for any $\a=(a_1,\dots,a_n)$,
    $$W(\a)  \le  W({a}_i,\tilde{a}_{-i}) +W(\tilde{a}_i,a_{-i}), $$
    for any $i$, $\tilde{a}_i \in \A_i$ and $\tilde{a}_{-i}$.
\end{lemma}
\noindent\textbf{Proof:} Fix $i$. From \eqref{eqn_sub_mod},  we have for any $j$,
\begin{eqnarray*}
    W(a_i,a_j,a_{-\{i,j\}}) - W(a_j,a_{-\{i,j\}}) &\le& W(\varnothing,a_{i}) - W(\varnothing), \\
    W(\a) - W(a_j,a_{-\{i,j\}}) & \le &W(a_i),\\
    W(\a) - W(\tilde{a}_i,a_{-i}) & \le &W(a_i,\tilde{a}_{-i}).
\end{eqnarray*} In the above, the first inequality follows since $W$ is sub-modular. The second inequality follows since $W(\varnothing)=0$, and last inequality follows from the non-decreasing nature of $W$. Hence the proof. \eop
}

\section{Truncated Noisy Best Response Algorithms} \label{sec:algo}

From~\cite{Seaton2023a}, since suboptimal NE have low stability, agents may be able to escape them by adding some noise while still considering the best responses of the agents.
We introduce a noise parameter~$\beta \in [0,0.5)$ and allow the agents to choose actions probabilistically from the $\beta$-neighbourhood of their best response.%
\footnote{A noise parameter $\beta\ge 0.5$ does not add any value to the system and makes all  NE recurrent.}
Basically, the agent makes a randomized action selection only from the set of actions which provide utility within $\beta$ of the current best-response payoff.
A high noise parameter $\beta$ may degrade the performance of the system by allowing inferior actions in the $\beta$-neighbourhood, while a low noise parameter may not be sufficient to escape the bad NE.  Our approach allows us to explicitly characterize this trade-off.

Let $\B_i(a_{-i})$ represent the best response set of agent $i$ when the action of the other agent is $a_{-i}$; i.e., $\B_i(a_{-i}):= \arg\max\{U_i(a_i,a_{-i}): a_i \in \A_i\}$. 
Further, define $\nbd{i}{a_{-i}}$ to be the set of actions in the  $\beta$-neighbourhood of the best response set for agent $i$ (with $a^* \in \B_i(a_{-i})$):
\begin{equation*}
  \nbd{i}{a_{-i}} := \{a_i: |U_i(a^*, a_{-i})-U(a_i,a_{-i})| \le \beta\}.
\end{equation*}%
\revten{A TNBR algorithm }proceeds by allowing agents to update their actions asynchronously according to a pre-specified probability distribution on $\nbd{i}{a_{-i}}$.
To specify this probability distribution, we define an \emph{action selection rule} $F(\cdot)$ as an algorithm which inputs a set of actions and outputs a probability distribution over that set.
When $F(\nbd{i}{a_{-i}})$ assigns positive probability to every action in $\nbd{i}{a_{-i}}$, we say that $F$ is a \emph{fully-supported} action selection rule.
Most of our results are agnostic to the specific choice of action selection rule, but for concreteness we present two feasible examples of fully-supported action selection rules here.

\begin{example} \label{ex:logit}
The \emph{logit selection rule} $F^{\rm log}_\gamma$ with rationality parameter $\gamma>0$ assigns a probability distribution to the actions in $\nbd{i}{a_{-i}}$ according to the softmax function.
That is, the probability that agent $i$ selects action $a_i'\in\nbd{i}{a_{-i}}$ is
\begin{equation}
    \Pr\left[a_i=a_i'\right] = \frac{\exp\left(\gamma U_i(a_i',a_{-i})\right)}{\sum_{a_i^*\in \nbd{i}{a_{-i}}}\exp\left({\gamma U_i(a_i^*,a_{-i})}\right)}.
\end{equation}
This rule is derived from the common \emph{log-linear learning} algorithm~\cite{Alos-Ferrer2010}.%
\footnote{
In the $\beta\to1$ limit, a TNBR algorithm with the logit selection rule resolves exactly to log-linear learning.
}
\end{example}

\begin{example} \label{ex:abra}
The \emph{mistakes rule} $F^{\rm mis}_p$ with rationality parameter $p\in[0,1)$ assigns a probability distribution to the actions in $\nbd{i}{a_{-i}}$ by usually selecting uniformly from the best response set, but occasionally selecting a best-response-neighbourhood action uniformly at random.
That is, agent $i$ selects action $a_i$ according to
\begin{equation*}
    a_i =\left\{ \begin{array}{ll}
       a \in \B_i(a_{-i})  &  \mbox{ with probability } p,\\
         a \in \nbd{i}{a_{-i}}  &  \mbox{ with probability } 1-p.\\ 
    \end{array}\right.
\end{equation*}
In the above, $a$ is chosen from $\B_i(a_{-i})$, or $\nbd{i}{a_{-i}}$ uniformly at random.
This rule is inspired by classical mistakes-based learning dynamics (e.g.,~\cite{Kandori1993,Kandori1995,Young1993}).
Using this action selection rule, one obtains an algorithm substantially identical to the Approximate Best Response Algorithm (ABRA) of the conference version of this paper~\cite{Singh2024}.
\end{example}



With the above definitions in place, we can now precisely state the family of algorithms which we study in this paper.
\begin{definition}
\revten{A \emph{truncated noisy best-response algorithm}} is specified by the tuple $(\beta, F(\cdot))$, where $\beta\geq0$ and $F(\cdot)$ is a fully-supported action selection rule.
The agents are initialized at some joint action profile $\a(0) = (a_1(0),a_2(0))$. 
At any $t \ge 1$, some agent $i$ is selected uniformly at random; this agent observes the joint action profile $\a(t-1)$ and then selects its next action by sampling $a_i(t)\sim F(\nbd{i}{a_{-i}(t-1)})$.
\end{definition}
It can be seen that the sequence of joint action profiles $\{\a(t)\}_t$ induced by any TNBR algorithm $(\beta,F)$ evolves as a Markov chain (see e.g.,~\cite{Levin,Norris}) governed by the noise parameter $\beta$ and action selection rule $F$.
This Markov chain need not be irreducible, and may have multiple recurrent classes depending upon the system objective function $W$.
We can now state the first general result of this paper, that the recurrent classes of these algorithms are governed only by $\beta$, and are independent of the choice of action selection rules:
\begin{theorem} \label{thm:recurrent}
Consider any fully-supported action selection rules $F_1$ and $F_2$ and any 2-player submodular maximization game $W$.
For any $\beta\geq0$, let $\frak{R}_1=\{\R_i\}_{i=1}^{k_1}$ and $\frak{R}_2=\{\R_i\}_{i=1}^{k_2}$ be the sets of recurrent classes of TNBR algorithms $(\beta,F_1)$ and $(\beta,F_2)$, respectively when applied to game $W$.
Then $\frak{R}_1=\frak{R}_2$.
\end{theorem}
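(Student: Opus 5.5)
The plan is to show that the recurrent classes of a finite Markov chain are determined only by its \emph{transition graph}, i.e., by which transitions have positive probability. The two TNBR chains $(\beta,F_1)$ and $(\beta,F_2)$ on the same game $W$ will then be shown to have the same transition graph. Since the state space $\A=\A_1\times\A_2$ is finite, this reduces the theorem to a combinatorial statement.

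\textbf{Step 1: the transition graph.} From the definition of a TNBR algorithm, the one-step transition probability from $\a=(a_1,a_2)$ to $\a'\neq\a$ is nonzero only if $\a'$ differs from $\a$ in exactly one coordinate. Suppose it differs in coordinate $i$. Then
\begin{equation*}
P_F(\a,\a')=\tfrac12\,\Pr_{F(\nbd{i}{a_{-i}})}[a_i'].
\end{equation*}
Because $F$ is fully supported, this quantity is positive if and only if $a_i'\in\nbd{i}{a_{-i}}$. The self-loop probability $P_F(\a,\a)$ is positive if and only if $a_i\in\nbd{i}{a_{-i}}$ for some $i$.

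The sets $\nbd{i}{a_{-i}}$ depend only on $\beta$, on $W$ through $U_i$, and on $a_{-i}$; they do not depend on $F$. Hence for every pair of states, $P_{F_1}(\a,\a')>0$ if and only if $P_{F_2}(\a,\a')>0$. In other words, the two chains have identical directed transition graphs $G$.

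\textbf{Step 2: recurrence is a graph property.} I will invoke the standard finite-state fact (see e.g.~\cite{Norris,Levin}): a state $x$ is recurrent if and only if every state reachable from $x$ can reach $x$ back. Recurrent classes are then exactly the communicating classes of $G$ that are closed, meaning no edge of $G$ leaves them.

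Both reachability and communication are defined purely through the existence of positive-probability paths, i.e., directed paths in $G$. Therefore the collection of closed communicating classes is a function of $G$ alone. Since the two chains share $G$, we get $\frak{R}_1=\frak{R}_2$.

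\textbf{Main obstacle.} The only delicate point is making sure that the selection of the agent and the action selection rule never introduce an $F$-dependent zero. Agent selection is uniform with probability $\tfrac12$ for each agent. Full support guarantees positivity exactly on $\nbd{i}{a_{-i}}$, and $F$ outputs a distribution over that set, so it assigns zero probability outside it.

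I also need to handle the case where both agents' updates can lead to the same state, which can only happen for the self-loop $\a'=\a$. There the probabilities add: $P_F(\a,\a)$ is a sum of two nonnegative terms, and it is positive precisely when at least one term is positive. That condition is again independent of $F$. Self-loops do not affect communicating classes in any case.
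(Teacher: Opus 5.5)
Your proposal is correct and follows essentially the same route as the paper. Both arguments use full support to show that $F_1$ and $F_2$ induce the same positive-probability transitions, hence the same reachability relation. Both then note that recurrent classes of a finite chain are the closed communicating classes, which depend only on reachability. Your explicit handling of the $\tfrac12$ agent-selection factor and of self-loops is a bit more careful than the paper's terse version, but the argument is the same.
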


\noindent \emph{Proof:} 
%
First note that for any action profile $a$ and either player $i$, because $F_1$ and $F_2$ are fully-supported, it holds that 
\begin{equation} \label{eq:support}
{\rm supp}(F_1(\nbd{i}{a_{-i}}))={\rm supp}( F_2(\nbd{i}{a_{-i}}))=\nbd{i}{a_{-i}}.
\end{equation}
Now, consider action profiles $a=(a_1,a_2)$ and $a'=(a_1',a_2')$.
First, suppose that $a'$ is reachable from $a$ under TNBR algorithm $(\beta,F_1)$; that is, there is a sequence of positive-probability transitions leading from $a$ to $a'$ given the action selection probabilities in $(\beta,F_1)$.
Due to~\eqref{eq:support}, this means that $a'$ is reachable from $a$ under TNBR algorithm $(\beta,F_2)$ as well.
Conversely, the same argument gives that if $a'$ is \emph{not} reachable from $a$ under $(\beta,F_1)$, that it is not reachable from $a$ under $(\beta,F_2)$ either.

Now, let $\R$ be a recurrent class of $(\beta,F_1)$; that is, for any $a,a'\in\R$, $a'$ is reachable from $a$ and for any $a''\notin\R$, $a''$ is not reachable from $a$.
By the above arguments, these reachability relationships hold under $(\beta,F_2)$ as well, meaning that $\R$ is a recurrent class of $(\beta,F_2)$.
By swapping $F_1$ and $F_2$ in this argument, the proof is completed.
%
\hfill \eop

We refer to any recurrent class that contains an optimizer of $W$ as an \textit{optimal class} (denoted by $\mathcal{R}_\beta^*$) and  any recurrent class that does not contain the optimizer as a \textit{sub-optimal class} (denoted by $\mathcal{R}_\beta$). The Markov chain may converge to an absorbing state, a sub-optimal class or an optimal class depending upon initial distribution. 
We now provide the qualitative analysis of the system objective over these classes generated under any TNBR algorithm $(\beta,F)$.

\section{The Recurrent Classes of TNBR Algorithms}\label{sec:main_results}

Here, we present a series of qualitative characterizations of the recurrent classes of TNBR Algorithms, parameterized by $\beta$.
Due to Theorem~\ref{thm:recurrent}, all of the results of this section hold for any action selection rule $F$.
Without loss of generality, assume that $(a^1_1,a_2^1)$ is the optimal profile with $W(a^1_1,a_2^1)=1$. \hide{and define
\begin{equation}\label{eqn_x_bar}
   \ubar{x}:= \min\{ W(a_1^j,a_2^1),W(a^1_1,a_2^k): j \in \mathcal{A}_1, k\in \mathcal{A}_2 \}, 
\end{equation}
to be the minimum value of system objective function that can be achieved when any agent unilaterally deviates from the optimal.
\textit{For ease of explanation, say $\ubar{x}$ is achieved by  unilateral deviation by agent 1, i.e., $\ubar{x}=W(a_1^k,a_2^1)$ for some $a_1^k \in \mathcal{A}_1$}. The analysis follows in the exact similar manner if $\ubar{x}$ was achieved by the deviation of agent 2. }

\subsection{Absorbing states improve upon worst-case}
If the sequence of  action profiles $\{\a(t)\}$ generated using a TNBR algorithm $(\beta,F)$ converges to an absorbing state, then we have the following result  with proof in Appendix:

\begin{theorem}\label{thm_NE}
     If $(a_1^*,a_2^*)$ is an absorbing state for Markov chain $\{(a_1(t),a_2(t))\}_t$ generated using TNBR with noise parameter $\beta$, then, i) $(a_1^*,a_2^*)$ is an NE, and  ii) the system objective satisfies $W(a_1^*,a_2^*) > \frac{1}{2}+\beta$. 
\end{theorem}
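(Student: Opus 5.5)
The plan is to first extract from absorption a strict gap inequality, and then feed it into the chain of inequalities from the proof of Lemma~\ref{lem_min_entry}, using the optimum $(a_1^1,a_2^1)$ with $W(a_1^1,a_2^1)=1$.

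\emph{Step 1 (NE and the $\beta$-gap).} For any $a_{-i}$, the best response set $\B_i(a_{-i})$ is contained in $\nbd{i}{a_{-i}}$. Because $F$ is fully supported, every action in $\nbd{i}{a_{-i}}$ is chosen with positive probability. Hence $(a_1^*,a_2^*)$ is absorbing exactly when $\nbd{i}{a^*_{-i}}=\{a_i^*\}$ for both $i$. This gives (i): $a_i^*$ is a best response, so $a^*$ is an NE. It also gives, for every $a_i'\neq a_i^*$, the strict bound $U_i(a_i',a_{-i}^*)<U_i(a_i^*,a_{-i}^*)-\beta$. Since $U_i$ differs from $W$ by the term $W(\emptyset,a_{-i}^*)$, which does not depend on $a_i$, this is equivalent to
\[
W(a_i',a_{-i}^*) < W(a^*)-\beta \quad \text{for all } a_i'\neq a_i^*.
\]

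\emph{Step 2 (bounding the two singletons).} If $a^*$ is itself optimal, then $W(a^*)=1>\frac12+\beta$ because $\beta<\frac12$, and we are done. Otherwise, recall from the proof of Lemma~\ref{lem_min_entry} that $1=W(a_1^1,a_2^1)\le W(a_1^1)+W(a_2^1)$. I bound each term using monotonicity:
\[
W(a_1^1)\le W(a_1^1,a_2^*), \qquad W(a_2^1)\le W(a_1^*,a_2^1),
\]
and also $W(a_1^1)\le W(a_1^1,\emptyset)$ and $W(a_2^1)\le W(\emptyset,a_2^1)$. The target is to show each of $W(a_1^1)$ and $W(a_2^1)$ is strictly less than $W(a^*)-\beta$. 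Adding the two bounds then gives $1<2W(a^*)-2\beta$, which is (ii).

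\emph{Step 3 (case split; the main obstacle).} The difficulty is the case where only one coordinate differs from the optimum. If both $a_1^*\neq a_1^1$ and $a_2^*\neq a_2^1$, then Step 1 bounds both unilateral deviations directly, and we are done. Suppose instead $a_2^*=a_2^1$ and $a_1^*\neq a_1^1$. Then $W(a_1^1)\le W(a_1^1,a_2^*)<W(a^*)-\beta$ by Step 1. For $W(a_2^1)$ I will use $W(a_2^1)=W(\emptyset,a_2^*)$, which is a unilateral deviation of player 1 to $\emptyset$. This is valid provided $a_1^*\neq\emptyset$, and then it also gives $W(a_2^1)<W(a^*)-\beta$.

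If $a_1^*=\emptyset$, then $a^*=(\emptyset,a_2^1)$. In that case player 1's deviation to $a_1^1$ yields $W=1\ge W(a^*)$, which contradicts the Step 1 inequality $1<W(a^*)-\beta$. So this subcase cannot occur. The symmetric case $a_1^*=a_1^1$, $a_2^*\neq a_2^1$ is handled the same way with the roles of the players swapped.

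Combining the cases yields $W(a^*)>\frac12+\beta$.
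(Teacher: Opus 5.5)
Your proof is correct, and it differs from the paper's in how Lemma~\ref{lem_min_entry} is used.

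The paper defines $\ubar{x}$, the worst value of a unilateral deviation from the optimum. It assumes without loss of generality that $\ubar{x}$ is attained by player~1, and derives the row bound $W(a_1^1,a_2')\ge\max\{\ubar{x},1-\ubar{x}\}\ge\frac12$ for all $a_2'$ in \eqref{eqn_low_min_entry}. It then uses absorption only once, for player~1's deviation to $a_1^1$, to get $W(a^*)-\beta>W(a_1^1,a_2^*)\ge\frac12$.

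You instead apply Lemma~\ref{lem_min_entry} at the optimum with $\tilde a_1=a_1^*$, $\tilde a_2=a_2^*$, and use the strict $\beta$-gap for both players. This gives $1\le W(a_1^1,a_2^*)+W(a_1^*,a_2^1)<2\bigl(W(a^*)-\beta\bigr)$. Your symmetric averaging needs neither $\ubar{x}$ nor the asymmetric reduction. The paper's route, in exchange, produces the row bound \eqref{eqn_low_min_entry}, which it reuses in Lemma~\ref{thm_rec_state} and Theorems~\ref{thm_abs_state} and~\ref{thm: waterbed}.

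Your case analysis is also more careful than the paper's one-line inequality. That inequality silently requires $a_1^*\neq a_1^1$ and never treats the case $a_1^*=a_1^1$. Your separate treatment of an optimal absorbing state is exactly where the standing assumption $\beta<\frac12$ is needed.

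However, the case you flag as the main obstacle is vacuous. If $a_2^*=a_2^1$ and $a_1^*\neq a_1^1$, your own Step~1 bound for player~1's deviation to $a_1^1$ reads $1=W(a_1^1,a_2^1)<W(a^*)-\beta\le 1$. This is the same contradiction you use in the subcase $a_1^*=\emptyset$. So the detour through the deviation to $\emptyset$ is unnecessary, and only two cases remain: $a^*$ is optimal, or both coordinates differ from the optimum.
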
 

The above result implies that the sequence of action profiles $\{(a_1(t),a_2(t))\}_t$ generated using $(\beta,F)$ converges only to the NE that have system objective value greater than $\frac{1}{2}+\beta$.
This provides a significant improvement over the known bound~\cite{Vetta2002} that every NE has system objective at least $\frac{1}{2}$.
Also, note the strict inequality in the theorem statement; this indicates that no TNBR algorithm can ever converge to a worst-case Nash equilibrium, even when the noise parameter $\beta=0$. 
This reaffirms the findings of \cite{Seaton2023a} regarding the fragility of worst-case NEs.

\subsection{Sub-optimal classes contain high-quality NE}
Now, we consider the sub-optimal classes: those recurrent classes which contain no optimal action profile.
We have the following Performance guarantee with proof in Appendix.

{\begin{theorem}\label{thm_abs_state}
If $\mathcal{R}_\beta$ is a sub-optimal class under  TNBR algorithm $(\beta,F)$  then
\begin{equation}\label{eqn_bdd_min_rec}
    \min_{a\in \R_\beta} W(a) >\frac{1}{2}-\frac{\beta}{2}.
\end{equation}
Further, it holds that
  \begin{equation}\label{eqn_new_max_two_p}
      \max_{a  \in \R_\beta} W(a) > \frac{1}{2}+\delta + \beta
  \end{equation}
where $\delta\in[0,\beta/2)$ satisfies $\min_{a \in \R_\beta} W(a) = \frac{1}{2}-\delta$.
\end{theorem}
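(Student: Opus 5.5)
The plan rests on one structural fact about a sub-optimal class $\R_\beta$: it is closed under $\beta$-neighbourhood moves. Write $M:=\max_{a\in\R_\beta}W(a)$ and $m:=\min_{a\in\R_\beta}W(a)$, and let $(a_1^1,a_2^1)$ be the optimum.

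\emph{Step 1: the optimal actions are never near-best responses inside $\R_\beta$.} Fix $a=(a_1,a_2)\in\R_\beta$. Any best response $b\in\B_1(a_2)$ lies in $\nbd{1}{a_2}$, so $(b,a_2)\in\R_\beta$ and hence $W(b,a_2)\le M$. Now suppose $a_1^1\in\nbd{1}{a_2}$. Then $(a_1^1,a_2)\in\R_\beta$. Since the optimum is not in $\R_\beta$, we need $a_2^1\notin\nbd{2}{a_1^1}$. This gives
\[
1=W(a_1^1,a_2^1)<\max_{b_2}W(a_1^1,b_2)-\beta\le M-\beta,
\]
which is impossible because $M\le 1$. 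Hence $a_1^1\notin\nbd{1}{a_2}$, that is,
\[
U_1(a_1^1,a_2)<\max_{b}U_1(b,a_2)-\beta .
\]
Symmetrically, $U_2(a_2^1,a_1)<\max_{b}U_2(b,a_1)-\beta$ for every $a\in\R_\beta$. The bound $\max_b U_1(b,a_2)\le M-W(\varnothing,a_2)$ gives $W(a_1^1,a_2)<M-\beta$. Feeding this into Lemma~\ref{lem_min_entry} at the optimum already yields $M>\tfrac12+\beta$, which is the $\delta=0$ case.

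\emph{Step 2: sharpen using a refined form of Lemma~\ref{lem_min_entry}.} Let $\tilde a\in\R_\beta$. By monotonicity, then submodularity applied to the two successive marginal additions, I expect
\[
1\le W(a_1^1\cup\tilde a_1,a_2^1\cup\tilde a_2)\le W(\tilde a)+U_1(a_1^1,\tilde a_2)+U_2(a_2^1,\tilde a_1).
\]
This treats actions as subsets of the disjoint ground sets $E_1,E_2$, with $W$ extended to unions. I then choose $\tilde a$ cleverly. Take $\tilde a$ with $\tilde a_1\in\B_1(\tilde a_2)$; this is possible from any state of $\R_\beta$ by one best-response move. Then Step 1 gives $U_1(a_1^1,\tilde a_2)<W(\tilde a)-W(\varnothing,\tilde a_2)-\beta$, and the second term is bounded by $M-W(\tilde a_1,\varnothing)-\beta$.

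Combining these, the goal is to obtain inequalities of the form $1+\beta<m+M$, which is exactly \eqref{eqn_new_max_two_p} with $m=\tfrac12-\delta$, and $m>\tfrac12-\tfrac\beta2$, which is \eqref{eqn_bdd_min_rec}. For the second inequality, I will pair the minimizer of $W$ over $\R_\beta$ with the states reached from it by one best-response move of each agent. Those states have value at most $M$ and at least $m$. The inequalities from Step 1 should then force $2m>1-\beta$ once $M\le1$ is used.

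\emph{Main obstacle.} The main difficulty is Step 2. I must select which states of $\R_\beta$ to plug into the refined inequality so that the $\beta$ penalty appears exactly once for the bound on $M$, and so that the ``self'' terms $W(\varnothing,\tilde a_2)$ and $W(\tilde a_1,\varnothing)$ cancel against $W(\tilde a)$ through $W(\tilde a)\le W(\tilde a_1,\varnothing)+W(\varnothing,\tilde a_2)$. My first attempt will be to take $\tilde a$ to be the minimizer of $W$ in $\R_\beta$ and use the fact that its best-response successors stay in $\R_\beta$. If that fails, I will instead chain the bound through the intermediate profile reached when only agent~1 deviates toward its best response.
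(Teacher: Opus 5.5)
\textbf{There is a genuine gap in Step~2.} Your Step~1 is correct. It is the paper's observation \eqref{eqn_2p_12}--\eqref{eqn_2p_13} that $a_1^1\notin\nbd{1}{a_2}$ and $a_2^1\notin\nbd{2}{a_1}$ at every $a\in\R_\beta$, and with Lemma~\ref{lem_min_entry} it gives $M>\frac12+\beta$.

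The gap cannot be fixed by choosing states more cleverly. Every fact you use is available for \emph{any} set that is merely closed under $\beta$-neighbourhood moves: Step~1 at each state, successor values in $[m,M]$, $M\le 1$, and your refined inequality (which holds in every game). Closure alone implies neither \eqref{eqn_bdd_min_rec} nor the $\delta$-improvement. Here is a counterexample with $\beta=0.1$. Take a coverage game with resources $A_1,A_2,B_1,B_2,Z$ of values $0.37,0.13,0.37,0.13,0.25$. Let $\A_1=\{\varnothing,a_1^1,c_1\}$ with $a_1^1=\{A_1,A_2\}$, $c_1=\{B_1,Z\}$, and $\A_2=\{\varnothing,a_2^1,c_2\}$ with $a_2^1=\{B_1,B_2\}$, $c_2=\{A_1,Z\}$. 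Then:
\begin{itemize}
\item[(i)] $W(a_1^1,a_2^1)=1$ is the unique maximum;
\item[(ii)] $W(c_1,c_2)=0.99$ and $W(c_1,a_2^1)=W(a_1^1,c_2)=0.75$;
\item[(iii)] $W(c_1,\varnothing)=W(\varnothing,c_2)=0.62$ and $W(a_1^1,\varnothing)=W(\varnothing,a_2^1)=0.5$.
\end{itemize}
Hence $\nbd{1}{\varnothing}=\nbd{1}{c_2}=\{c_1\}$ and $\nbd{2}{\varnothing}=\nbd{2}{c_1}=\{c_2\}$. The set $\{(\varnothing,\varnothing),(c_1,\varnothing),(\varnothing,c_2),(c_1,c_2)\}$ is closed and avoids the optimum, and your Step~1 holds at each of its points. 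Yet it has $m=0$ and $m+M=0.99<1+\beta$; only $(c_1,c_2)$ is actually recurrent. Consistent with this, putting the minimizer into your refined inequality with only forward information returns $1<2M-2\beta$ again. The hoped-for ``$2m>1-\beta$ once $M\le1$'' has nothing to rest on.

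\textbf{The missing idea is recurrence, i.e.\ \emph{backward} information.} Every state of $\R_\beta$ is entered from inside $\R_\beta$ by some agent's $\beta$-neighbourhood move.
\begin{itemize}
\item[(a)] For the minimizer $a=(a_1,a_2)$, say $a_2\in\nbd{2}{a_1}$; this is the paper's \eqref{eqn_2p_7}, and the other case is symmetric. Then $\max_b W(a_1,b)\le m+\beta$, so Step~1 gives $W(a_1,a_2^1)<m$. Lemma~\ref{lem_min_entry} then gives $W(a_1^1,y)>1-m$ for all $y\in\A_2$. Step~1 for agent~1 at $a$ gives $\max_x W(x,a_2)>1-m+\beta$, which is \eqref{eqn_new_max_two_p}.
\item[(b)] For \eqref{eqn_bdd_min_rec} you need one more backward step. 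Agent~1's action $a_1$ must itself have been chosen inside the class, so there is $(a_1,\tilde b)\in\R_\beta$ with $a_1\in\nbd{1}{\tilde b}$. (If agent~1 never moves within the class, $\tilde b=a_2$ works.) Then $W(a_1^1,\tilde b)<\max_x W(x,\tilde b)-\beta\le W(a_1,\tilde b)\le m+\beta$. Lemma~\ref{lem_min_entry} yields $1\le W(a_1^1,\tilde b)+W(a_1,a_2^1)<2m+\beta$.
\end{itemize}
The best-response successors of the minimizer cannot replace this predecessor state.

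Separately, your refined inequality evaluates $W$ at $a_1^1\cup\tilde a_1$, which need not lie in $\A_1$. Lemma~\ref{lem_min_entry} suffices, so that extension is unnecessary.
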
}

From the above theorem, it is guaranteed that even a sub-optimal class always contains a Nash equilibrium with system objective value more than $\frac{1}{2}+\beta$.
Moreover, when the sub-optimal class contains an action profile with system objective less than $\frac{1}{2}$, then a) the quality of the best Nash equilibrium in the class is further improved by the same amount, and b) the system objective of any action profile in any sub-optimal class cannot be worse than $\frac{1}{2}$ by more than $\frac{\beta}{2}$.

While one would typically hope \emph{not} to converge to a sub-optimal class, Theorem~\ref{thm_abs_state} provides a strong mitigating promise: sub-optimal classes always contain a reasonably ``good'' NE, and cannot contain any arbitrarily bad action profiles.

\subsection{Optimal Classes Have a Safety Guarantee}
We now consider any optimal class $\mathcal{R}^*_\beta$ --- the recurrent class of action profiles that contains the optimal action profile $(a^1_1,a^1_2)$.
We derive a Safety guarantee: a lower bound on the system objective value over the optimal class. 
\revten{Towards proving this result (and later Theorem~\ref{thm: waterbed}), we present the following Lemma which specializes Lemma~\ref{lem_min_entry} to the optimal recurrent class.
The proof of Lemma~\ref{thm_rec_state} appears in the Appendix.}
\revten{\begin{lemma}\label{thm_rec_state}
Define $\ubar{x}$ as the following:
\begin{equation}\label{eqn_x_bar}
   \ubar{x}:= \min\{ W(a_1',a_2^1),W(a^1_1,a_2'): a_1' \in \mathcal{A}_1, a_2'\in \mathcal{A}_2 \}, 
\end{equation}
If $\mathcal{R}_\beta^*$ is an optimal class of action profiles under  TNBR algorithm with $(\beta,F)$, then the system objective function satisfies the following:
\begin{equation}
    W(a_1,a_2) \ge \max\{\max\{1-\ubar{x}, \ubar{x}\}-2\beta, \ubar{x} - \beta\}
\end{equation} for all $(a_1,a_2) \in \mathcal{R}^*_\beta$, with $\ubar{x}$ as in \eqref{eqn_x_bar}.  
\end{lemma}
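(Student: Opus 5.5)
The plan is to exploit that the optimal class $\mathcal{R}^*_\beta$ consists exactly of the profiles reachable from $(a_1^1,a_2^1)$ under positive-probability TNBR transitions, since a recurrent class is closed. It therefore suffices to fix an arbitrary finite transition path from $(a_1^1,a_2^1)$ and lower-bound $W$ along it. Two observations make this tractable. First, the term $\max\{1-\ubar{x},\ubar{x}\}-2\beta$ splits into $1-\ubar{x}-2\beta$ and $\ubar{x}-2\beta$, and the second is dominated by $\ubar{x}-\beta$. So I only need the two bounds $W\ge \ubar{x}-\beta$ and $W\ge 1-\ubar{x}-2\beta$. Second, the marginal contribution utility \eqref{eqn_util} converts neighbourhood membership into a statement about $W$. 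If agent 1 moves from $(a_1,a_2)$ to $a_1'\in\nbd{1}{a_2}$, the common term $W(\emptyset,a_2)$ cancels, so $W(a_1',a_2)\ge \max_{b}W(b,a_2)-\beta$. The symmetric statement holds for agent 2.

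\emph{Bound $\ubar{x}-\beta$.} Every profile in the path except the optimum (where $W=1$) is produced by some agent's move. If agent 1 moved against $a_2$, then $W(a_1',a_2)\ge W(a_1^1,a_2)-\beta\ge \ubar{x}-\beta$, because $W(a_1^1,a_2)$ is one of the quantities minimized in \eqref{eqn_x_bar}. The agent-2 case is symmetric.

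\emph{Bound $1-\ubar{x}-2\beta$.} Without loss of generality, $\ubar{x}=W(\hat a_1,a_2^1)$ for some $\hat a_1\in\A_1$, i.e., the minimum is attained by a deviation of agent 1. Applying Lemma~\ref{lem_min_entry} at the optimum with $\tilde a_1=\hat a_1$ gives, for every $a_2\in\A_2$,
\[
1\le W(a_1^1,a_2)+W(\hat a_1,a_2^1),
\]
so $W(a_1^1,a_2)\ge 1-\ubar{x}$. Consequently, any move by agent 1 lands on a profile with $W\ge 1-\ubar{x}-\beta$.

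For moves by agent 2, I will maintain along the path the invariant $\max_{b\in\A_2}W(a_1,b)\ge 1-\ubar{x}-\beta$ for the current first coordinate $a_1$. This holds initially, since $a_1=a_1^1$ and the maximum is $1$. Agent 2's moves leave $a_1$ unchanged, so they preserve the invariant. When agent 1 changes $a_1$, the new profile satisfies $W(a_1',a_2)\ge 1-\ubar{x}-\beta$ by the previous paragraph, which restores the invariant for $a_1'$. Given the invariant, an agent-2 move from $(a_1,a_2)$ to $(a_1,a_2')$ yields
\[
W(a_1,a_2')\ge \max_b W(a_1,b)-\beta\ge 1-\ubar{x}-2\beta.
\]
Agent-1 moves give the stronger $1-\ubar{x}-\beta$, so every profile in $\mathcal{R}^*_\beta$ satisfies $W\ge 1-\ubar{x}-2\beta$. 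Combining this with the first bound proves the lemma.

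The main obstacle is this second bound: agent 2's moves are not directly controlled by $\ubar{x}$ when the minimum is attained on agent 1's side. The device handling it is the row-maximum invariant, which costs one extra $\beta$ and explains the $2\beta$. I would double-check the WLOG step: when $\ubar{x}$ is attained by agent 2, the symmetric argument applies with the roles of the agents, and of row and column maxima, swapped.
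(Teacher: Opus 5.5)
Your proof is correct and follows essentially the same route as the paper's. Agent-1 moves are controlled through $W(a_1^1,\cdot)\ge\max\{\ubar{x},1-\ubar{x}\}$, which comes from Lemma~\ref{lem_min_entry}; agent-2 moves lose at most one further $\beta$ from a row already certified by an earlier agent-1 move, with column $a_2^1$ supplying the $\ubar{x}-\beta$ floor. Your row-maximum invariant along paths from the optimum simply makes rigorous the paper's more informal step that $(a_1,a_2)\in\mathcal{R}^*_\beta$ with $a_1\neq a_1^1$ forces $W(a_1,a_2')\ge\max\{1-\ubar{x},\ubar{x}\}-\beta$ for some $a_2'$.
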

%
The above result provides the worst possible value that the system objective function can take once the sequence $\{\a(t)\}_t$ enters the optimal class. The lemma shows that this lower bound depends upon the particular game in consideration through $\ubar{x}$.}

\revten{We can now state the main `safety' result:}

\begin{theorem}\label{cor_rec_state}
If $\mathcal{R}_\beta^*$ is an optimal class under TNBR algorithm $(\beta,F)$, then the system objective function satisfies the following:
\begin{equation}\label{eqn_wrst_case_optimal}
\min_{a\in \R_\beta^*} W(a) \ge  \frac{1}{2}-\frac{3\beta}{2}. 
\end{equation}
\end{theorem}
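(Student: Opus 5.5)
Theorem~\ref{cor_rec_state} follows from Lemma~\ref{thm_rec_state} by minimizing its lower bound over the single game-dependent quantity $\ubar{x}\in[0,1]$. Lemma~\ref{thm_rec_state} already gives, for every $(a_1,a_2)\in\R^*_\beta$,
\[
W(a_1,a_2)\ \ge\ h(\ubar{x}) := \max\{\max\{1-\ubar{x},\ubar{x}\}-2\beta,\ \ubar{x}-\beta\}.
\]
So it suffices to show that $\min_{x\in[0,1]} h(x)\ge \frac12-\frac{3\beta}{2}$. The only structure of the game used is that $\ubar{x}$ is some number in $[0,1]$, since $W$ maps into $[0,1]$.

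I will bound $h$ by a case split on $\ubar{x}$. The natural crossover is where $1-\ubar{x}-2\beta=\ubar{x}-\beta$, i.e., $\ubar{x}=\frac{1-\beta}{2}$.
\begin{itemize}
\item If $\ubar{x}\le\frac{1-\beta}{2}$, then $h(\ubar{x})\ge 1-\ubar{x}-2\beta\ge 1-\frac{1-\beta}{2}-2\beta=\frac12-\frac{3\beta}{2}$.
\item If $\ubar{x}\ge\frac{1-\beta}{2}$, then $h(\ubar{x})\ge \ubar{x}-\beta\ge \frac{1-\beta}{2}-\beta=\frac12-\frac{3\beta}{2}$.
\end{itemize}
Both terms are monotone in $\ubar{x}$ in opposite directions, so the minimum of their maximum is attained at the crossing point, and either branch gives the same value there. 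The term $\ubar{x}-2\beta$ inside the inner max is never needed, since it is dominated by $\ubar{x}-\beta$.

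Taking the minimum over $a\in\R^*_\beta$ then yields \eqref{eqn_wrst_case_optimal}. There is no real obstacle beyond the case split, because all the substantive work is already in Lemma~\ref{thm_rec_state}, whose proof relies on Lemma~\ref{lem_min_entry} and on the $\beta$-neighbourhood reachability from the optimum. The one point to state explicitly is that the bound is uniform over games: it must hold for whatever $\ubar{x}$ the game induces, which is why we take the worst case over $\ubar{x}$ rather than using a game-specific value.
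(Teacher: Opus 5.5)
Your proof is correct and follows the paper's own argument: both minimize the lower bound of Lemma~\ref{thm_rec_state} over $\ubar{x}$, and both find the minimum value $\frac12-\frac{3\beta}{2}$ at the crossover $\ubar{x}=\frac{1-\beta}{2}$. Your single case split at that crossover is a slightly tidier presentation than the paper's, which first splits at $\ubar{x}=\frac12$ and then minimizes on the lower branch, but the two proofs are the same.
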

%
\revten{\noindent\textbf{Proof:} From Lemma \ref{thm_rec_state}, the minimum value that $W$ can take over $\mathcal{R}_\beta^*$ is $\max\{\max\{1-\ubar{x}, \ubar{x}\}-2\beta, \ubar{x} - \beta\}$. Hence, for $\ubar{x} \ge \frac{1}{2}$,  $W \ge \ubar{x}-\beta \ge \frac{1}{2}-\beta$. For $\ubar{x}<\frac{1}{2}$, $W \ge \max\{1-\ubar{x} - 2\beta, \ubar{x}-\beta\}$. The right hand side is minimized when $\ubar{x}= \frac{1}{2}-\frac{\beta}{2}$, and provides a minimum value of $\frac{1}{2}-\frac{3}{2}\beta$, obtaining~\eqref{eqn_wrst_case_optimal}. \eop}

The above theorem shows that the minimum value that the system objective can take if $\{\a(t)\}_t$ converges to $\mathcal{R}_\beta^*$ can be as bad as in \eqref{eqn_wrst_case_optimal}. 
This may seem like a negative result but we discuss in Section \ref{sec:num_rationality_parameter} that an appropriate action selection rule can limit the time that a TNBR algorithm spends in the action profiles with the worst system objective value. 
Before that, we describe the trade-off between improving the sub-optimal classes and degrading the optimal class in the following.

\subsection{Trade-off between Safety and Performance guarantees}

To this point, we have shown independent Performance (Theorems~\ref{thm_NE},~\ref{thm_abs_state}) and Safety (Theorems~\ref{thm_abs_state},~\ref{cor_rec_state}) guarantees; here we also show that the two are closely linked by a waterbed-like tradeoff effect.
Specifically, if a particular game's optimal classes have a very poor Theorem~\ref{cor_rec_state} safety guarantee~\eqref{eqn_wrst_case_optimal}, then its sub-optimal classes necessarily have a very favorable Theorem~\ref{thm_abs_state} performance guarantee~\eqref{eqn_bdd_min_rec}.
\reveleven{The precise mechanism causing this is somewhat opaque; fundamentally, the effect occurs because manipulating the payoff matrix to force the optimal-class safety guarantee \emph{down} causes certain payoffs elsewhere in the payoff matrix to rise (enforced indirectly by Lemma~\ref{lem_min_entry}), which pushes the suboptimal-class performance guarantee \emph{up.}
}




\begin{theorem} \label{thm: waterbed}
    If $\R_\beta^*$ is an optimal class under  TNBR algorithm $(\beta,F)$ that satisfies
\begin{equation}\label{eqn_lb_opt_tight}
    \min_{a\in \R_\beta^*} W(a) 
    =  \frac{1}{2}- \beta - \delta
\end{equation} for some $\delta \in [0,\frac{\beta}{2}]$, then every sub-optimal class $\mathcal{R}_\beta$ satisfies
\begin{equation}\label{eqn_lb_subopt_opt}
    \max_{a\in \R_\beta} W(a) > \frac{1}{2}+\beta+\delta.
\end{equation} 
\end{theorem}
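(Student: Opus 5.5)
The plan is to squeeze the game-dependent quantity $\ubar{x}$ from \eqref{eqn_x_bar} using the hypothesis \eqref{eqn_lb_opt_tight}. Then I transfer the resulting upper bound on $\ubar{x}$ into a lower bound on the best payoff reachable inside any sub-optimal class, using Lemma~\ref{lem_min_entry} applied at the optimal profile $(a_1^1,a_2^1)$.

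\emph{Step 1 (bounding $\ubar{x}$).} By Lemma~\ref{thm_rec_state}, every profile in $\R_\beta^*$ has value at least $\ubar{x}-\beta$. Combined with \eqref{eqn_lb_opt_tight}, this gives $\ubar{x}-\beta\le \frac12-\beta-\delta$, i.e. $\ubar{x}\le\frac12-\delta$. The other branch of Lemma~\ref{thm_rec_state} would also give $\ubar{x}\ge\frac12-\beta+\delta$, but only the upper bound is needed. Without loss of generality, let $\ubar{x}$ be attained by a deviation of agent~1, say $\ubar{x}=W(a_1',a_2^1)$; otherwise swap the roles of the agents below.

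\emph{Step 2 (payoffs along agent 1's optimal action are high).} Apply Lemma~\ref{lem_min_entry} to $\a=(a_1^1,a_2^1)$ with $\tilde a_1=a_1'$ and arbitrary $\tilde a_2=b$. This gives $1\le W(a_1^1,b)+\ubar{x}$, so
\[
W(a_1^1,b)\ge 1-\ubar{x}\ge \tfrac12+\delta \quad\text{for every } b\in\A_2 .
\]

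\emph{Step 3 (the sub-optimal class cannot touch $a_1^1$).} Let $\R_\beta$ be a sub-optimal class and suppose some profile $(a_1^1,b)\in\R_\beta$. When agent~2 updates from $(a_1^1,b)$, the payoff differences for agent~2 equal differences in $W$ with $a_1^1$ fixed. The maximal value of $W(a_1^1,\cdot)$ is $W(a_1^1,a_2^1)=1$, attained at $a_2^1$. So $a_2^1$ is a best response, hence lies in $\nbd{2}{a_1^1}$, and the optimum is reachable from $\R_\beta$. This contradicts sub-optimality, since a recurrent class is closed. Hence no profile of the form $(a_1^1,b)$ lies in $\R_\beta$.

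\emph{Step 4 (conclusion).} Pick any $(a_1,b)\in\R_\beta$ and let agent~1 update. Since $\R_\beta$ is closed and by Step~3, $a_1^1\notin\nbd{1}{b}$. Because $U_1$ differences with $b$ fixed equal $W$ differences, this means
\[
\max_{a}W(a,b) > W(a_1^1,b)+\beta .
\]
The inequality is strict because the neighbourhood uses ``$\le\beta$''. The best-response profile $(a^*,b)$ is reached with positive probability, so it is in $\R_\beta$. Therefore
\[
\max_{a\in\R_\beta}W(a)\ge W(a^*,b)>W(a_1^1,b)+\beta\ge \tfrac12+\delta+\beta ,
\]
using Step~2, which is \eqref{eqn_lb_subopt_opt}.

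The main subtle point is Step~3 together with the strictness in Step~4. One must check that exclusion from the $\beta$-neighbourhood is strict and that closedness of the recurrent class really forbids $a_1^1$. The WLOG on which agent attains $\ubar{x}$ is just bookkeeping.
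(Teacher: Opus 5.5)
Your proof is correct and follows the paper's route. You bound $\ubar{x}\le\frac12-\delta$ via the $\ubar{x}-\beta$ branch of Lemma~\ref{thm_rec_state}, convert this into $W(a_1^1,\cdot)\ge\frac12+\delta$ via Lemma~\ref{lem_min_entry}, and then use the fact that a sub-optimal class cannot have $a_1^1$ in agent~1's $\beta$-neighbourhood (the reasoning behind \eqref{eqn_2p_12}) to push the best-response value above $\frac12+\delta+\beta$. Your Steps~3--4 just spell out the ``similar arguments'' that the paper leaves implicit.
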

\textbf{Proof:} From Lemma \ref{thm_rec_state} and equation \eqref{eqn_lb_opt_tight},
\begin{align*}
    \ubar{x}-\beta & \le  \frac{1}{2}- \beta - \delta \implies \ubar{x} \le \frac{1}{2}-\delta.
\end{align*}Using equation \eqref{eqn_low_min_entry}, the above implies $ W(a_1^1,a_2')\ge   \frac{1}{2} +\delta$  for all $ a_2' \in \A_2$. Using similar arguments as \eqref{eqn_2p_13} leads to \eqref{eqn_lb_subopt_opt}. \eop

\hide{\subsection{Expected system objective under ABRA}\label{sub_sec_expected_vals}
{Theorem \ref{thm_NE} and Theorem \ref{thm_abs_state} show that the minimum value that system objective $W$ can take over absorbing states is strictly more than $\frac{1}{2}$}. However, when the sequence $\{\a(t)\}_t$ converges to {\color{blue} an optimal class or sub-optimal class}, the system objective  can take values less than $\frac{1}{2}$ as shown in  {\color{blue}Theorem \ref{thm_abs_state} and Theorem \ref{cor_rec_state}}. It is important to note that the actual performance of ABRA depends upon the expected time that the $\{\a(t)\}_t$ spends in such minimizing action profiles. This expected time is governed by the rationality parameter $p$, and in Section~\ref{sec: numerical} we  numerically show that  ABRA spends most of the time choosing optimal actions 
 once rationality parameter $p$ is sufficiently high. {\color{blue}We discuss the case of optimal class in the following.} Let $\pi^*_p$ represent the stationary distribution concentrated on an optimal class for ABRA with noise parameter $\beta$ and rationality parameter $p$. Let $q^*_p$ represent the probability of choosing an optimal action profile under stationary distribution $\pi^*$, for example, $q^*_p = \pi^*_p(a^1_1,a^1_2)$. We show numerically that $q^*_p $ increases with $p$. Based on this observation, we make the following conjecture whose proof is the left for the future work.
 \begin{conjecture}\label{conjec}
     The probability of choosing an optimal action $q^*_p$ under stationary distribution $\pi^*_p$ concentrated on an optimal class increases with rationality parameter $p$ for any fixed noise parameter $\beta$. 
 \end{conjecture}
 
 In the following, we show that the expected value of system objective under stationary distribution $\pi^*_p$ is guaranteed to be strictly more than $\frac{1}{2}$ plus a term controlled by noise parameter once $q_p^*$ is sufficiently large (proof in Appendix).

\begin{proposition}\label{prop_lower_bound_prob}
 For any fixed $\beta$, if rationality parameter $p$ is chosen such that $q_p^* > \frac{5\beta}{1+3\beta}$, then expected system objective value under stationary distribution concentrated on the optimal class $\pi^*_p$,
 \begin{eqnarray}
  E_{\pi^*_p}[W] > \frac{1}{2}+ \beta.  
 \end{eqnarray} 
\end{proposition}

In all, for any fixed $\beta$, one can anticipate that  increasing the rationality  parameter $p$ results in higher values of $q^*_p$. Then by Proposition \ref{prop_lower_bound_prob}, the expected value of system objective over optimal class is more than $\frac{1}{2}+\beta$. It is easy to verify that this also holds true for expected value of system objective over sub-optimal classes and absorbing states.}

\revten{
\begin{remark}
In this paper, we provide the TNBR framework and a characterization of several important tradeoffs, and allow ``end-users'' of TNBR to decide how to select $\beta$ and $F(\cdot)$ given their own specific metrics of interest.
Due to the generality of the TNBR framework, a value of $\beta$ which is optimal (with respect to some metric) for one choice of action selection rule may not be optimal for any other choice of action selection rule.
\end{remark}
}

\section{Numerical Experiments} \label{sec: numerical}
As already mentioned, the sequence of action profiles $\{\a(t)\}$ generated using a TNBR algorithm need not converge to a single absorbing state, and may  have multiple recurrent classes depending upon the algorithm's parameters.
Thus, it is natural to  analyse the expected value of the system objective $W$ under achievable stationary distributions. 
For concreteness, in this section we conduct all our experiments with the \emph{mistakes model} action selection rule as defined in Example~\ref{ex:abra}; we call this the Approximate Best Response Algorithm (ABRA).

Let $\Pi$ be the set of stationary distributions of the Markov chain $\{\a(t)\}_t$ concentrated on various recurrent classes and $E_\pi[\cdot]$ represent the expectation under distribution $\pi$. Define the following performance metric:
\begin{equation}\label{eqn_perf_metric}
  \P{W}{\beta,p}  := \frac{\min_{\pi \in \Pi}E_\pi[W]}{\max_{\a} W(\a)} = \min_{\pi \in \Pi}E_\pi[W],
\end{equation}
since $\max_{\a} W(\a)=1$ in our framework. This metric measures the worst case expected system objective as compared to the optimal value.

\subsection{\revten{Experiments on a Small Game}}\label{sec:num_rationality_parameter}
In this section, we numerically show the variation in probability of choosing an optimal action under stationary distribution concentrated over optimal class, $q^*_p$, as the rationality parameter varies. We also plot the expected system objective as a function of rationality parameter. Consider the following system objective function, 
\begin{equation}\label{eqn_W_ABRA_avoid_optimal_class_NE}
    W=\begin{bmatrix}
        1  &   0.8 &   0.6  &  0.5  \\
    0.5 &  0.6  &   0.4  &  0.39 \\
    0.5  &  0.39 &   0.39  &  0.39\\
    0.5 &  0.39  &  0.39   & 0.71
    \end{bmatrix},
\end{equation}which is controlled by actions of two players with action sets, $\A_i=\{a_i^1,a_i^2,a_i^3,a_i^4\}$ for $i =1,2$.
The $(j,k)$-th entry of matrix in \eqref{eqn_W_ABRA_avoid_optimal_class_NE} corresponds to the system objective value achieved under action profile $(a_1^j,a_2^k)$; for example, $W(a^1_1,a^1_2)=1$ is the system optimal. 
Recall that the system objective function is the potential function for the corresponding potential game induced by marginal utilities. It is well known that any NE of the potential game is the local maximizer of the potential function (see \cite{shapley}). Thus, the game induced by  $W$ in \eqref{eqn_W_ABRA_avoid_bad_NE} has two NEs, $(a^1_1,a^1_2)$ with objective value $1$ and $(a_1^4,a_2^4)$ with objective value $0.71$; the noise parameter is fixed to $\beta=0.2$.

In game~\eqref{eqn_W_ABRA_avoid_optimal_class_NE}, as the sequence of action profiles $\{\a(t)\}_t$ generated using ABRA  for $(\beta,p)$ progresses, it may converge to the absorbing state $(a_1^4,a_2^4)$ yielding a system objective value $0.71$ which is strictly more than $\frac{1}{2}+\beta$ as shown in Theorem \ref{thm_NE}. 
Otherwise, $\{\a(t)\}_t$ may  converge to the optimal class, which for $\beta=0.2$ contains the action profiles with the values $\{1,0.8,0.6,0.5,0.4\}$.
The minimum value that the system objective can take is 0.4, which is less than  $\frac{1}{2}$, and is worse than the well known PoA bound in \cite{Vetta2002}.
\begin{figure}
    \centering
    \includegraphics[trim ={ 3.5cm 8cm 3cm 9cm}, clip, scale=0.45]{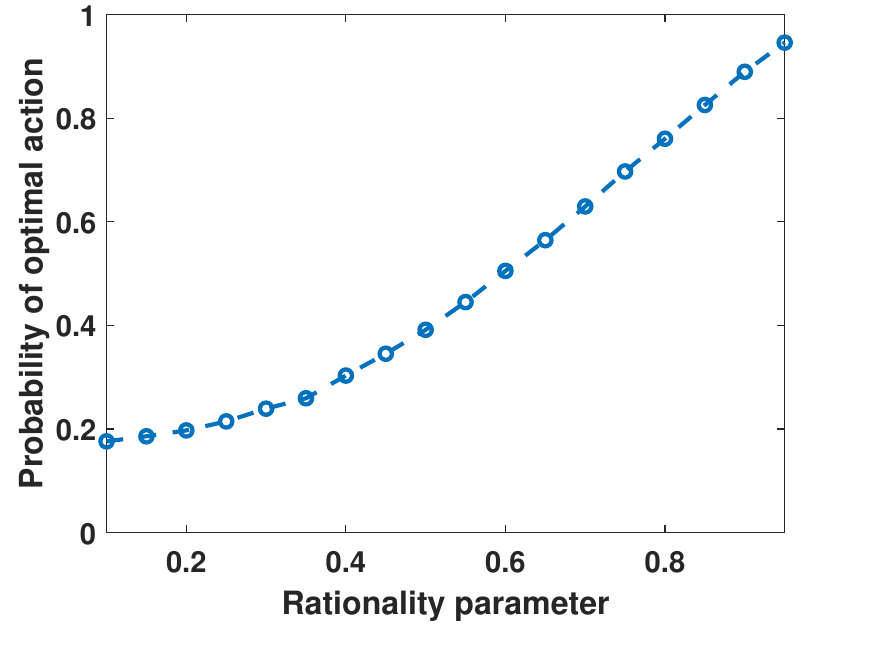}
    \caption{\revten{The probability of choosing an optimal action in an optimal class for $W$ in \eqref{eqn_W_ABRA_avoid_optimal_class_NE} of section \ref{sec:num_rationality_parameter}, when noise parameter $\beta=0.2$.}} 
    \label{fig:stationary_dist}\hide{

    \includegraphics[trim ={ 3.5cm 8cm 3cm 9cm}, clip, scale=0.5]{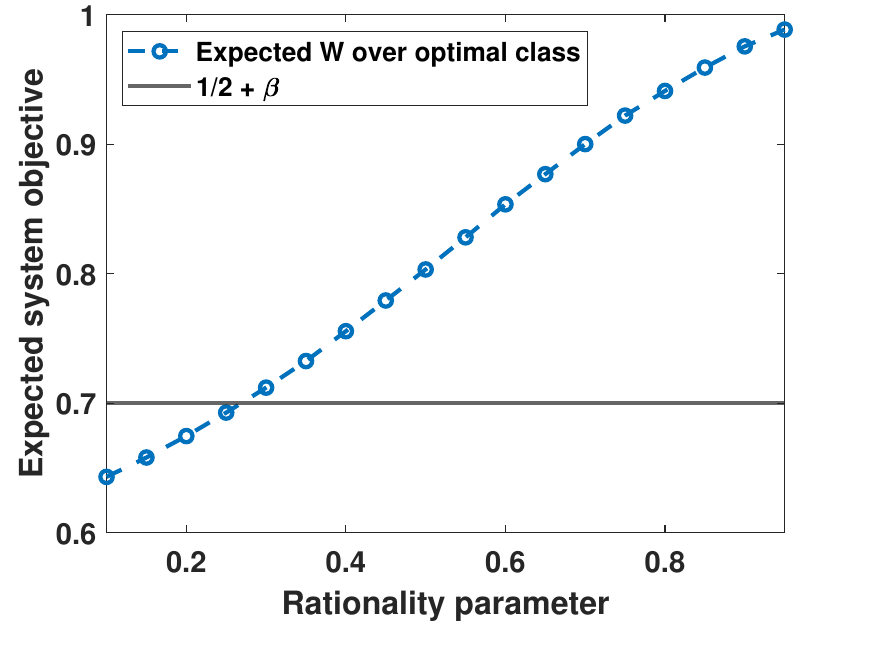}
    \caption{Expected value of system objective under stationary distribution concentrated on optimal class for $W$ in \eqref{eqn_W_ABRA_avoid_optimal_class_NE} of section \ref{sec:num_rationality_parameter}, when  noise parameter $\beta=0.2$. The expected value $E_{\pi^*_p}[W]$  increases with the rationality parameter $p$.}
    \label{fig:expected_W}}
\end{figure}

\begin{figure}
\centering
     \includegraphics[trim ={ 3.5cm 8cm 3cm 9cm}, clip, scale=0.45]{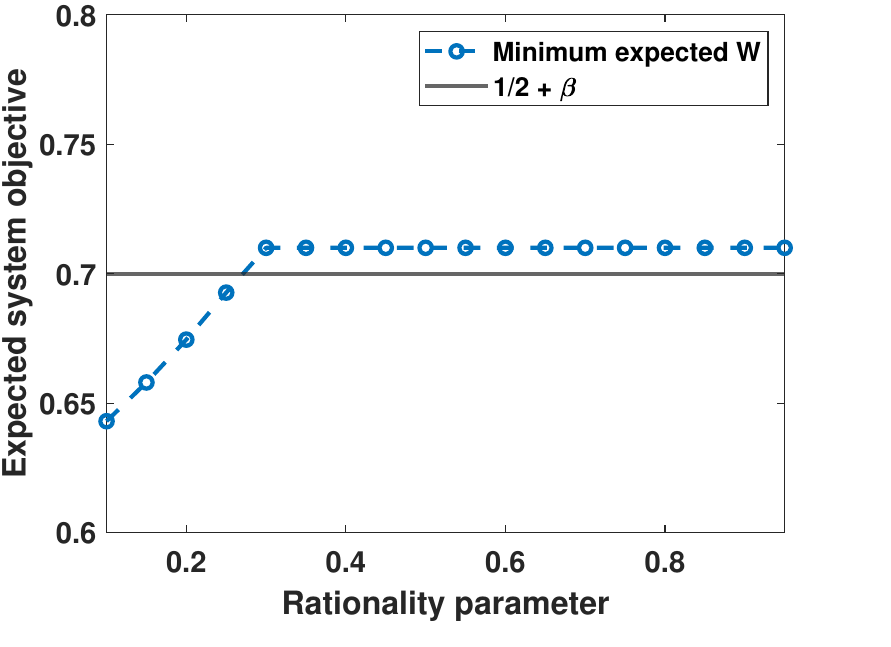}
     \caption{\revten{Minimum expected value of system objective $W$ in \eqref{eqn_W_ABRA_avoid_optimal_class_NE} of section \ref{sec:num_rationality_parameter}, when noise parameter $\beta=0.2$. When rationality parameter $p$ is sufficiently high, this value rises above $\frac{1}{2}+\beta$.}}
     \label{fig:min_exp_W}
 \end{figure}
 
However, in Figure~\ref{fig:stationary_dist}, we observe that the probability of choosing the optimal action  (when $\{\a(t)\}_t$ converges to the optimal class) increases as the rationality parameter $p$ increases.
This implies that ABRA spends most of the time in an optimal action profile if it converges to an optimal class.  
\hide{Figure \ref{fig:expected_W} shows that the expected value of system objective under stationary distribution $\pi^*_p$ concentrated on optimal class, $E_{\pi^*_p}[W]$, also increases with the rationality parameter. Further, $E_{\pi^*_p}[W]> \frac{1}{2}+ \beta$ when rationality parameter $p \ge 0.3$; this is due to  the fact that probability of choosing optimal action, $q^*_p$, increases with $p$. One may anticipate that the minimum expected value of the system objective $ \P{W}{\beta,p} $ defined in see \eqref{eqn_perf_metric} should also improve with the rationality parameter.}
In Figure \ref{fig:min_exp_W}, we plot $ \P{W}{\beta,p} $ as a function of rationality parameter $p<1$ and observe that the minimum expected value of the system objective $ \P{W}{\beta,p} $ in  \eqref{eqn_perf_metric} improves with the rationality parameter $p$ and is above $\frac{1}{2}+\beta$ once $p\ge 0.3$.

 \subsection{Avoiding the bad NE using ABRA}\label{sec:escaping_bad_ne}
Now, we present an example where ABRA avoids a bad NE. We consider the following system objective function,
\begin{equation}\label{eqn_W_ABRA_avoid_bad_NE}
    W=\begin{bmatrix}
        1 & 0.8 & 0.5\\
         0.5   &0.6  &   0.5\\
    0.5 &  0.5  &  0.51
    \end{bmatrix},
\end{equation}which is controlled by actions of two players with action sets, $\A_i=\{a_i^1,a_i^2,a_i^3\}$ for $i =1,2$.  Here $W(a^1_1,a^1_2)=1$ is the system optimal. The game induced by  $W$ in \eqref{eqn_W_ABRA_avoid_bad_NE} has two NEs, the good NE $(a^1_1,a^1_2)$ with objective value $1$ and  the bad NE $(a_1^3,a_2^3)$ with objective value $0.51$.

 In Figure \ref{fig:beta_avoid_bad_NE}, we plot the minimum expected value of system objective $W$, i.e., $\P{\beta,p}{W}$ of \eqref{eqn_perf_metric} 
 as a function of $\beta$ and fix the rationality parameter at $p=0.5$.   When the noise parameter $\beta$ is very small, it is not sufficient to escape the bad NE, and the agents may converge to  the  $(a_1^3,a_2^3)$, which yields minimum possible expected system objective $0.51$. When noise parameter is slightly  increased, e.g., $\beta =0.1$, then the NE $(a^3_1,a^3_2)$ becomes transient, and the optimal action profile becomes the only absorbing state guaranteeing the optimal payoff of 1. However,  when the noise is further increased, the quality of the optimal class degrades, and all the action profiles constitute one optimal class: the addition of sub-optimal action profiles in the recurrent class reduces $\P{\beta,p}{W}$. 

\begin{figure}
    \centering
    \includegraphics[trim ={ 3.5cm 8cm 3cm 9cm}, clip, scale=0.45]{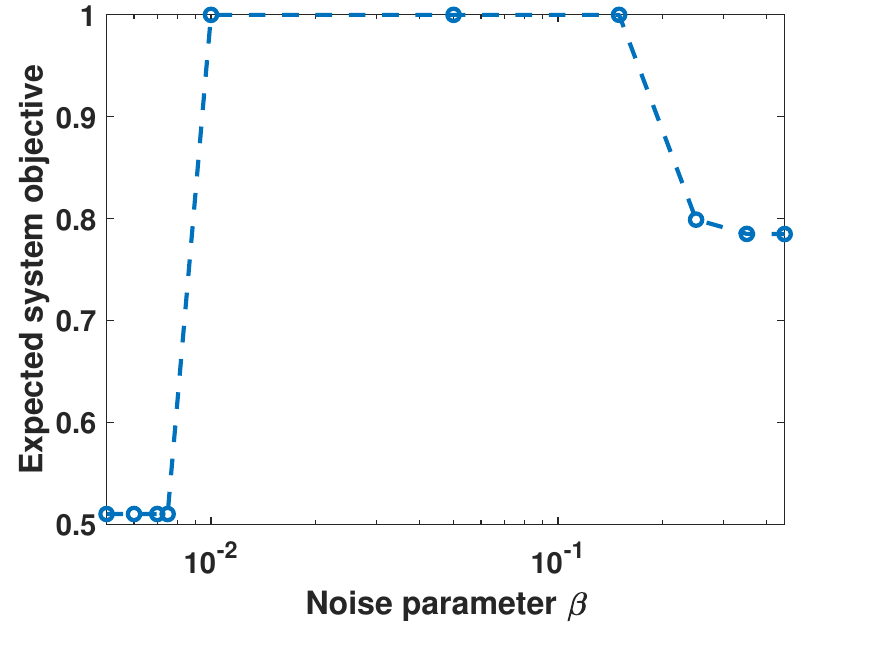}
    \caption{\revten{Minimum expected value of system objective $W$  in \eqref{eqn_W_ABRA_avoid_bad_NE} of section \ref{sec:escaping_bad_ne} as a function of noise parameter $\beta$.
    Adding a small amount of noise escapes the worst NE; however, large amounts of noise reduce system performance.}}
    \label{fig:beta_avoid_bad_NE}
\end{figure}

\subsection{\revten{Experiments on Large Games}}\label{sec:large experiments}

\revten{The experiments in the preceding sections illustrate some of the algorithm's fine-grained behavior in a digestible way.
To complement this, here we present the results of a few experiments on much larger games.
The games in this section are max-weight set cover games (this is a classic submodular problem; see e.g.~\cite{Paccagnan2020}).
A max-weight set cover game has a set of resources $\mathcal{R}$ of positive values $\{v_r\}_{r\in\mathcal{R}}$.
Agent $i$ has action set $\mathcal{A}_i\subset 2^\mathcal{R}$.
The problem's system objective is to maximize the total value of resources which are included in any agent's action set:
\begin{equation}
    W(a_1,a_2) := \sum_{r\in a_1\cup a_2} v_r.
\end{equation}
For these experiments, we generated two instances of max-weight set cover games with $|\mathcal{R}|=9$ and $|\mathcal{A}_i|=50$ for $i\in\{1,2\}$.
Resource values were uniformly sampled (with replacement) from $\{1,\dots,10\}$, and action sets were uniformly sampled (without replacement) from the subsets $2^\mathcal{R}$ containing no more than 4 elements.
To conform to the paper's assumptions, we normalized the system objective to have a maximum of $1$.%
\footnote{For full implementation details, see \url{https://github.com/descon-uccs/tnbr-2026/}.}
}

\revten{Our experiments compare two action selection rules: the logit rule $F_\gamma^{\rm log}$ with $\gamma=10$, and the mistakes rule $F_p^{\rm mis}$ with $p=0$.
That is, our implemented mistakes rule simply performs uniform selection from the best-response neighbourhood $\nbd{i}{a_{-i}}$.
These two rules are convenient, because they each naturally converge to well-known game-theoretic learning algorithms when $\beta$ is at an extreme point.
In particular, $F_\gamma^{\rm log}$ converges to the classic log-linear learning when $\beta\to 1$, and both $F_\gamma^{\rm log}$ and $F_p^{\rm mis}$ converge to asynchronous best response dynamics when $\beta\to0$.
}

\revten{For each generated game and each action selection rule, we sweep $\beta$ from 0 to 1; for each value of $\beta$ and each game instance, we perform 5000 random action initializations, then run each TNBR variant for 20000 agent updates from each initialization.
Then, over the last 5000 agent updates of each run, we calculate
\begin{itemize}
    \item The system objective averaged over the last 5000 agent updates, and
    \item The minimum system objective observed over the last 5000 agent updates.
\end{itemize}
Each of these quantities are recorded as the ``result'' of the particular run.
We then aggregate these last-5000 quantities over all runs, in the following way:
\begin{itemize}
    \item We average all (averaged) system objectives, and
    \item We compute the minimum system objective. This stands as a proxy for the minimum objective value of any recurrent action profile.
\end{itemize}}

\begin{figure} 
    \centering
    \includegraphics[width=.43\textwidth]{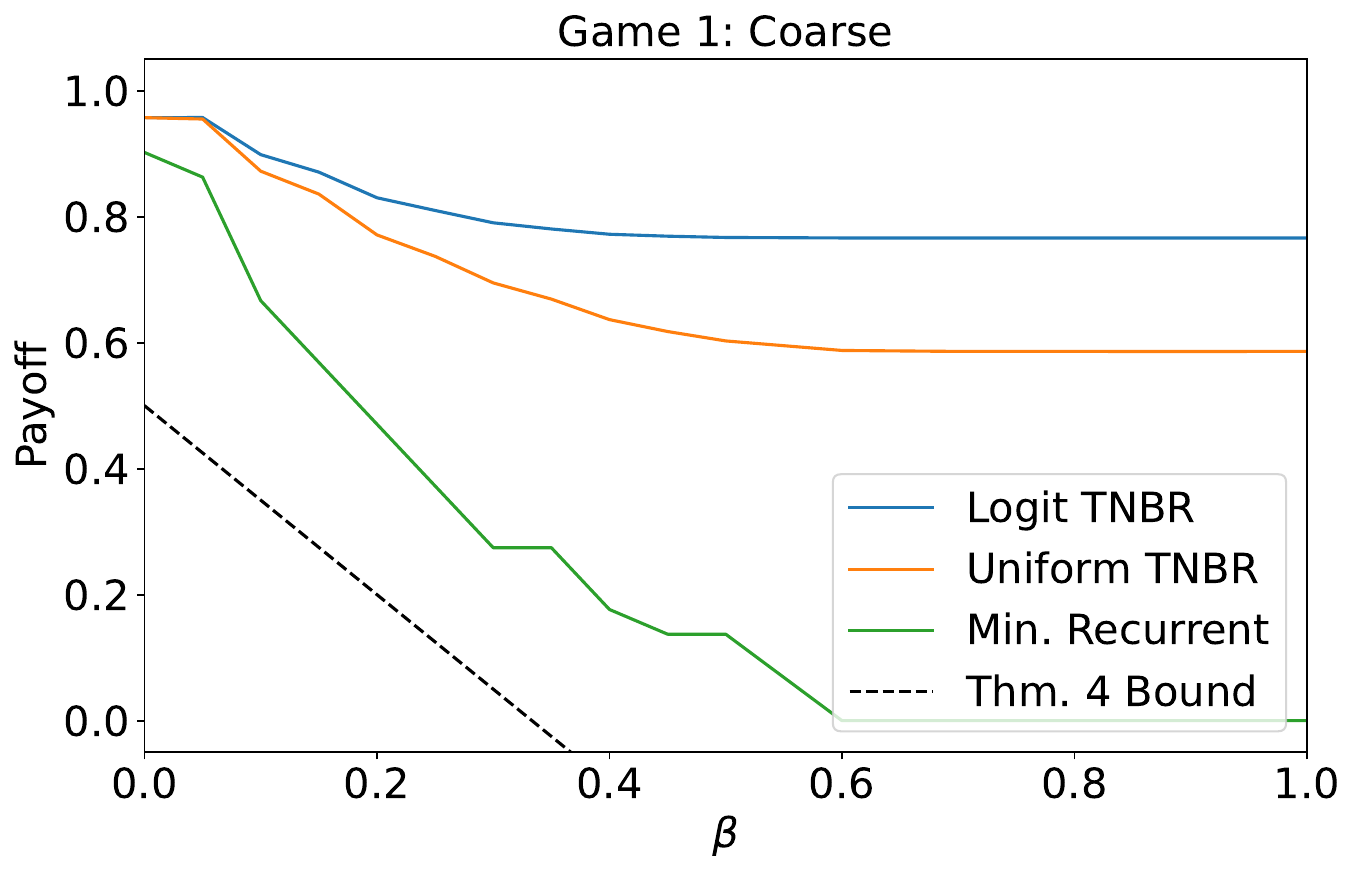}
    \includegraphics[width=.43\textwidth]{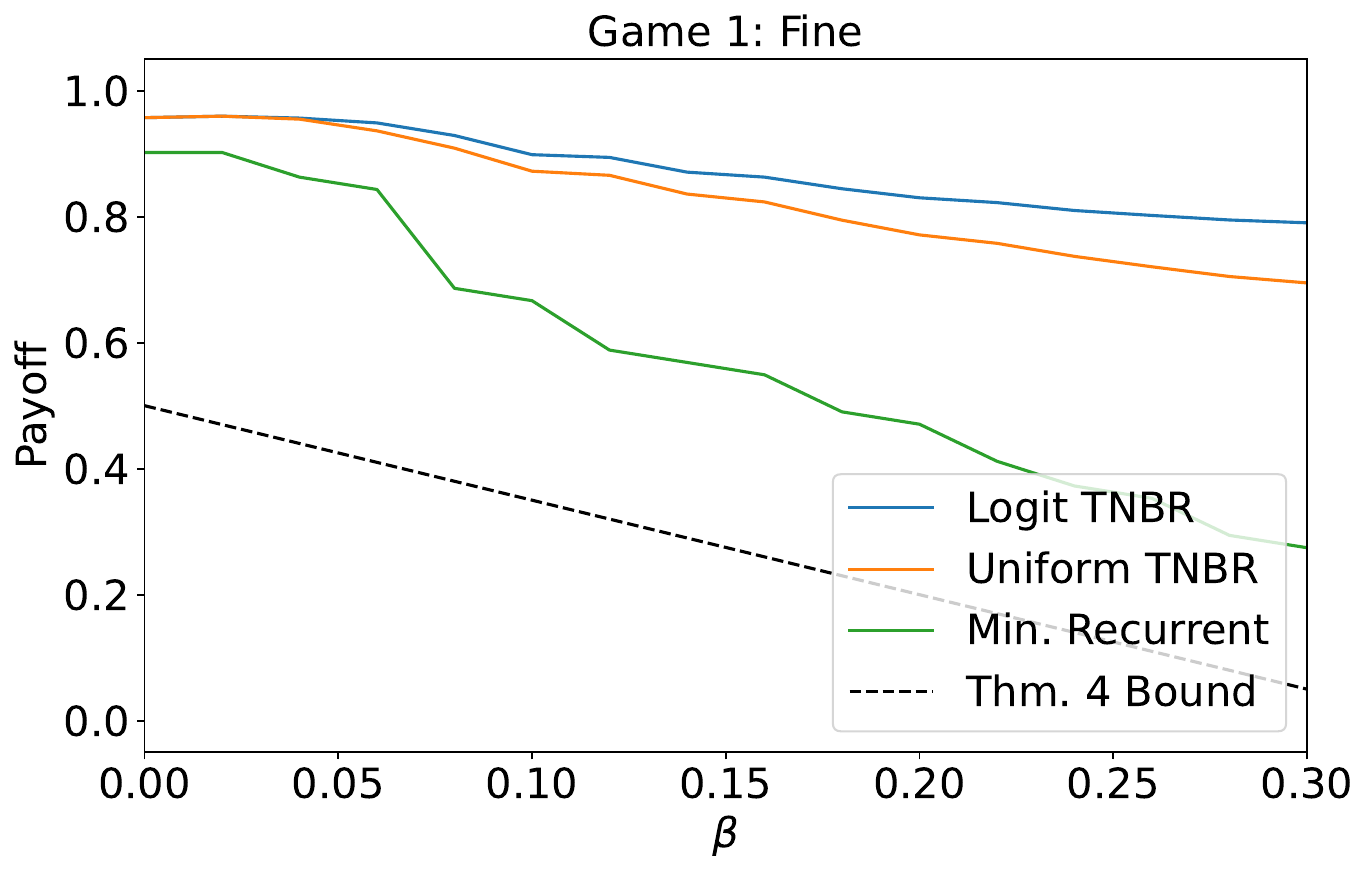}
    \caption{\revten{Large-game experiment results as described in Section~\ref{sec:large experiments} for Game 1, which has worst-case NE of welfare $0.765$ and stability margin $0.641$.
    The TNBR traces are averages over all runs of the average welfare of last 5000 timesteps of each run for the logit and uniform action selection rules, respectively.
    The green trace is the minimum welfare of any recurrent state (estimated empirically).
    Note that the upper plot shows $\beta\in[0,1]$; the lower plot is the same experiment with $\beta\in[0,0.3]$.}}
    \label{fig:game1}
\end{figure}

\begin{figure}
    \centering
    \includegraphics[width=.43\textwidth]{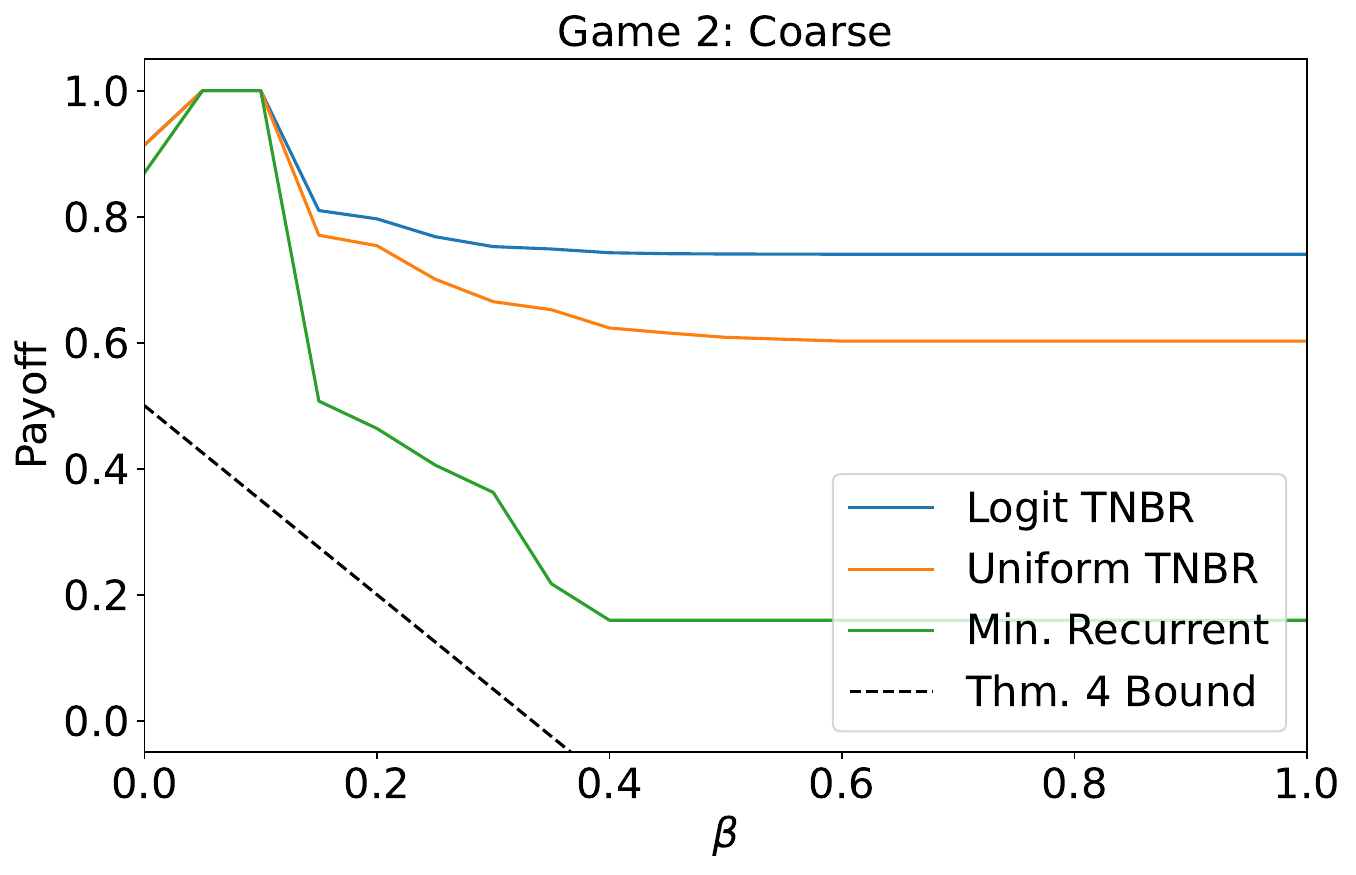}
    \includegraphics[width=.43\textwidth]{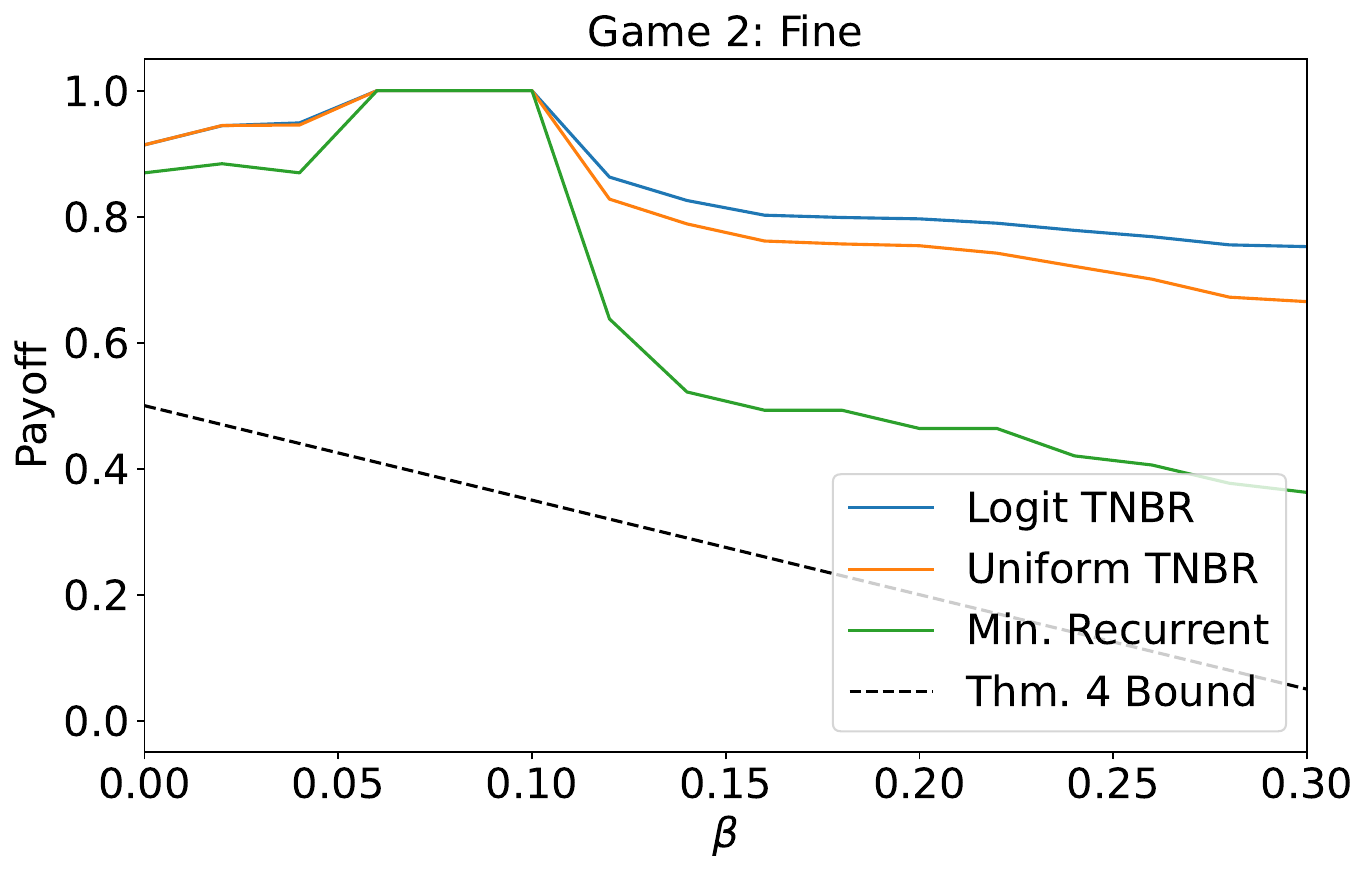}
    \caption{\revten{Large-game experiment results as described in Section~\ref{sec:large experiments} for Game 2, which has worst-case NE of welfare $0.783$ and stability margin $0.037$.
    The TNBR traces are averages over all runs of the average welfare of last 5000 timesteps of each run for the logit and uniform action selection rules, respectively.
    The green trace is the minimum welfare of any recurrent state (estimated empirically).
    Note that the upper plot shows $\beta\in[0,1]$; the lower plot is the same experiment with $\beta\in[0,0.3]$.}}
    \label{fig:game2}
\end{figure}

\revseven{The experimental results are depicted in Figures~\ref{fig:game1} and~\ref{fig:game2} and display a number of interesting features.
Note that we repeat the experiment on two separate games which we identify as ``Game 1'' and ``Game 2.''
These games have similarly low-efficiency worst-case Nash equilibria of welfare $0.765$ and $0.783$, respectively.
They are distinguished by their \emph{stability margin} values, computed according to the definition in~\cite{Seaton2023a}; these values are $0.641$ and $0.037$, respectively.
In particular, the very low stability margin of $0.037$ in Game 2 indicates that when agents are selecting the worst-case Nash equilibrium, either agent can deviate to a system-optimal action without experiencing a large loss in payoff.
This indicates that the worst-case Nash equilibrium lacks stability.
Interestingly, in both example games, the suboptimal and optimal recurrent classes collapse together into a unique recurrent class even for relatively low values of $\beta$.
The main distinction between the two games is the following:
\begin{itemize}
    \item In Game 1 (high stability margin), when suboptimal and optimal recurrent classes collapse together, the resulting unique recurrent class contains enough low-welfare action profiles that the average payoff (over algorithm initializations) immediately begins to decrease with $\beta$.
    \item In Game 2 (low stability margin), the suboptimal and optimal recurrent classes collapse into a unique recurrent class \emph{which contains only an optimal action profile,} at least until $\beta>0.1$.
\end{itemize}
While our experiments cannot tell if this is a statistically significant distinction, it may hint at further structure to be studied and will be the subject of future work.
}

\revten{The experimental results also validate the ``safety guarantee'' of Theorem~\ref{cor_rec_state}; the green trace in each plot of Figures~\ref{fig:game1} and~\ref{fig:game2} (the empirically-estimated minimum welfare of any recurrent state) remains substantially above the black dashed line (the lower bound given in Theorem~\ref{cor_rec_state}). 
}

\section{\revten{Discussion and} Conclusions}
We propose the class \revten{of TNBR algorithms} as applied to 2-agent submodular maximization problems, and we analytically prove worst-case guarantees on the recurrent classes of the resulting Markov chain.
To our knowledge, these guarantees are the first of their kind beyond the simple decades-old bounds provided by Price of Anarchy analysis.
We hope this inspires new questions in the study of performance in multiagent optimization algorithms: future work should investigate more nuanced questions than those of traditional worst-case equilibrium analysis.

\revten{
We hope to extend our result to general $n$-player submodular games.
To date, we have found no straightforward extensions beyond $2$ players.
It is possible to prove a multi-player version of Lemma~\ref{lem_min_entry}, but we have been unable to find a way to use it to prove serviceable bounds such as those presented in Theorem~\ref{cor_rec_state}.
Many of the 2-player results (e.g.,~\eqref{eqn_x_bar}) are facilitated by the fact that any action profile can be reached by any other in at most 2 unilateral deviations; as the number of agents grows, this ``deviation distance'' grows accordingly and may require different mathematical techniques to analyse.
}

\revten{
Last, we note that TNBR algorithms share a common scalability limitation with classical best-response-based algorithms (e.g.,~\cite{Monderer1996}) and other stochastic algorithms with payoff-dependent probability distributions (e.g., logit-response dynamics~\cite{Alos-Ferrer2010}): agents must evaluate their utility functions for \emph{every} available action at every timestep.
For games with large action sets, this becomes computationally prohibitive.
Accordingly, future research could investigate payoff-based or sampling-based approaches to promote better scaling behavior.
}



\section*{APPENDIX: Proofs}

{\noindent\textbf{Proof of Lemma \ref{thm_rec_state}: } 
First, note that $\ubar{x}$ is the minimum value of the system objective function that can be achieved when any agent unilaterally deviates from the optimal. As in the proof of Lemma~\ref{thm_rec_state}, we assume without loss of generality that $\ubar{x}$ is achieved by unilateral deviation by agent 1, i.e., $\ubar{x}=W(\revother{a_1'},a_2^1)$ for some \revother{$a_1' \in \mathcal{A}_1$.} The analysis follows in the exact similar manner if $\ubar{x}$ were achieved by the deviation of agent 2.}

{Fix $\beta$ and consider the problems with non-empty $\mathcal{R}_\beta^*$. Consider any action profile $(a_1,a_2)$ in $\mathcal{R}^*_\beta$. If $a_1 = a^1_1$, then by \eqref{eqn_low_min_entry}, 
\begin{eqnarray*}
    W(a_1,a_2) &\ge& \max\{\ubar{x},1-\ubar{x}\},\\
    &\ge& \max\{\max\{1-\ubar{x}, \ubar{x}\}-2\beta,\ubar{x}-\beta\}.
\end{eqnarray*}
Now, say $a_1 \ne a^1_1$. Observe that for all $\tilde{a}_2 \in \A_2$, the set of action choices for player 1, $\B_1(\tilde{a}_2)\cup\nbd{1}{\tilde{a}_2 }$ always guarantees a payoff more than $\max\{1-\ubar{x},\ubar{x}\}-\beta$ using \eqref{eqn_low_min_entry}. This implies that $(a_1,a_2) \in \mathcal{R}_\beta^*$ under TNBR only if $$W(a_1,a_2')\ge \max\{1-\ubar{x},\ubar{x}\}-\beta \mbox{ for some } a_2' \in \A_2.$$%
Then, the $\beta$-neighbourhood of best responses for player 2, $\B_2(a_1)\cup \nbd{2}{a_1 }$ guarantees a payoff more than $\max\{\max\{1-\ubar{x},\ubar{x}\}-2\beta,\ubar{x}-\beta\}$.  Since the agents can never move to an action profile worse than the ones in $\beta$-neighbourhood under TNBR algorithms, and since the above is true for any $(a_1,a_2)$, \revseven{the proof is complete}. \eop}

\noindent\textbf{Proof of Theorem \ref{thm_NE}:} Let $\ubar{x}$ be defined as in~\eqref{eqn_x_bar}; that is, the minimum value of system objective function that can be achieved when any agent unilaterally deviates from the optimal. As in the proof of Lemma~\ref{thm_rec_state}, we assume without loss of generality that $\ubar{x}$ is achieved by unilateral deviation by agent 1, i.e., \revother{$\ubar{x}=W(a_1',a_2^1)$ for some $a_1' \in \mathcal{A}_1$.} The analysis follows in the exact similar manner if $\ubar{x}$ was achieved by the deviation of agent 2. 

For any $a_2' \in \A_2$, first observe by the definition of $\ubar{x}$ in \eqref{eqn_x_bar} and Lemma \ref{lem_min_entry} that
\begin{eqnarray}\label{eqn_low_min_entry}
     W(a_1^1,\revother{a_2'}) &&+\ \ubar{x} \ \ge\  W(a_1^1,a_2^1) =1, \mbox{ thus}\nonumber\\
   W(a_1^1,\revother{a_2'}) &\ge&  1 -\ubar{x}, \mbox{ and again by \eqref{eqn_x_bar}},\nonumber\\ 
   W(a_1^1,\revother{a_2'})&\ge& \max\{\ubar{x},1-\ubar{x}\}\ \ge\  \frac{1}{2} \ \forall \ \revother{a_2' \in \A_2,}
\end{eqnarray}because $\max\{\ubar{x},1-\ubar{x}\}\ge  \frac{1}{2}$ for all $ \ubar{x} \in [0,1]$.
If $(a_1^*,a_2^*)$ is an absorbing state for the Markov chain generated using TNBR algorithms with noise parameter $\beta$, then it is trivially an NE, and $
    W(a_1^*,a_2^*) - \beta > W(a^1_1,a_2^*) \ge \frac{1}{2} $ by \eqref{eqn_low_min_entry}.  \eop

\noindent\textbf{Proof of Theorem \ref{thm_abs_state}:} Let $(a_1,a_2) \in \R_\beta$. Then\footnote{If any of \eqref{eqn_2p_12} or \eqref{eqn_2p_13} do not hold, then $(a_1,a_2)$ can lead to the optimal action profile $(a_1^1,a_2^1)$, violating the definition of the sub-optimal class. If~\eqref{eqn_2p_6} and \eqref{eqn_2p_7} both do not hold then $(a_1,a_2)$ cannot be in $\R_\beta$.}
\begin{eqnarray}
    W(a_1^1,a_2) &<& \max_{a \in \A_1} W(a, a_2) -\beta,\label{eqn_2p_12} \\
  \mbox{and} \ \  W(a_1,a_2^1) &<& \max_{b \in \A_2} W(a_1, b) -\beta, \label{eqn_2p_13}
\end{eqnarray}and at least  one of the following must hold:
\begin{eqnarray}
    W(a_1,a_2) &\ge& \max_{a \in \A_1} W(a, a_2) -\beta,\label{eqn_2p_6} \\
  \mbox{and/or} \ \  W(a_1,a_2) &\ge& \max_{b \in \A_2} W(a_1, b) -\beta. \label{eqn_2p_7}
\end{eqnarray}Say \eqref{eqn_2p_6} holds. Then by \eqref{eqn_low_min_entry} and \eqref{eqn_2p_12},
\begin{equation}\label{eqn_15_holds}
  \frac{1}{2}\ \le \ W(a_1^1,a_2) < \max_{a \in \A_1} W(a, a_2) - \beta \ \le \  W(a_1,a_2).
\end{equation}Thus, we only need to consider the case when  \eqref{eqn_2p_7} holds and \eqref{eqn_2p_6} does not -- we must have some $\tilde{b} \in \A_2$ with $(a_1,\tilde{b}) \in \R_\beta$ such that\footnote{When \eqref{eqn_2p_6} does not hold, $(a_1,a_2)$ is not reachable by action of player~1, thus it much be reachable by action of player~2 -- thus, there must be some action profile $(a_1,\tilde{b})\in \R_\beta$ reachable by action of player 1 such that player 2 can reach $(a_1,a_2)$ from $(a_1,\tilde{b})$.},
\begin{eqnarray}\label{eqn_2p_11}
   W(a_1,\tilde{b}) &\ge &  \max_{a \in \A_1} W(a,\tilde{b}) - \beta \ > \ W(a^1_1,\tilde{b}).
   \end{eqnarray}%
We now prove \eqref{eqn_bdd_min_rec} using contradiction. Say  $W(a_1,a_2)= \frac{1}{2}-\frac{\beta}{2}-\epsilon$ for some $\epsilon\ge 0$. 
Using equations \eqref{eqn_2p_13} and \eqref{eqn_2p_7}, 
\begin{eqnarray}
W(a_1,a^1_2) &<& \max_{b \in \A_2} W(a_1,b) - \beta \ \le \ \frac{1}{2} - \frac{\beta}{2} - \epsilon. \hspace{3mm}\label{eqn_2p_9}
\end{eqnarray}By Lemma \ref{lem_min_entry}, $W(a_1^1,\tilde{b}) \ge 1 - W(a_1,a_2^1) > \frac{1}{2} + \frac{\beta}{2} + \epsilon  $. Further by \eqref{eqn_2p_11} and \eqref{eqn_2p_9}, we have the contradiction,
\begin{eqnarray*}
 \frac{1}{2} + \frac{\beta}{2} + \epsilon  &<& W(a_1^1,\tilde{b}) \ <\  W(a_1,\tilde{b}) \ \le\ \frac{1}{2} + \frac{\beta}{2} -\epsilon.
\end{eqnarray*}

Towards proving \eqref{eqn_new_max_two_p}, say  $W(a_1,a_2) = \frac{1}{2} - \delta$ for $0 \le \delta < \frac{\beta}{2}$. \hide{Let  $(a_1,a_2) \in \R_\beta$ be such that $W(a_1,a_2) = \frac{1}{2} - \delta$ for $0 \le \delta < \frac{\beta}{2}$. From the definition of $\R_\beta$, at least one of the \eqref{eqn_2p_6} or \eqref{eqn_2p_7} must hold. First consider the case when \eqref{eqn_2p_6} holds. This implies, further using  \eqref{eqn_low_min_entry} and \eqref{eqn_2p_12},
\begin{eqnarray*}
  \frac{1}{2}\ \le \ W(a_1^1,a_2) &<& \max_{a \in \A_1} W(a, a_2) - \beta \ \le \ \frac{1}{2}  - \delta,
\end{eqnarray*}which provides a contradiction. Thus, $\min_{a \in \R_\beta} W(a) > \frac{1}{2}$ whenever \eqref{eqn_2p_6} holds. Now consider the only remaining case, when \eqref{eqn_2p_7} holds and \eqref{eqn_2p_6} does not.}By \eqref{eqn_15_holds}, equation \eqref{eqn_2p_6}  must not hold, thus \eqref{eqn_2p_7} should be true. By \eqref{eqn_2p_13}  and  \eqref{eqn_2p_7}, 
\begin{eqnarray}
   \frac{1}{2}-\delta = W(a_1,a_2) \ \ge\  \max_{b \in \A_2} W(a_1, b) -\beta &>&  W(a_1,a^1_2).\nonumber
\end{eqnarray}Thus, by Lemma \ref{lem_min_entry}, $ W(a^1_1,a_2) > \frac{1}{2} +\delta$.
\hide{\begin{eqnarray}\label{eqn_2p_3}
     W(a^1_1,a_2) &>& \frac{1}{2} +\delta.
\end{eqnarray}}Using \eqref{eqn_2p_12}, 
\begin{eqnarray*}
  \frac{1}{2} +\delta \ <\  W(a_1^1,a_2) &<& \max_{a \in \A_1} W(a, a_2) -\beta \\
  \max_{a \in \A_1} W(a, a_2)&>&\frac{1}{2} + \delta + \beta,
\end{eqnarray*}and since $\arg\max_{a \in \A_1} W(a, a_2) \in \R_\beta$ whenever $(a_1,a_2) \in \R_\beta$, \eqref{eqn_new_max_two_p} is true.   \eop

\hide{\vspace{3cm}
First consider the case when $\max\{1-\ubar{x}, \ubar{x}\}-2\beta> \ubar{x}-\beta$, which implies $1-\ubar{x}> \ubar{x}$. 

Say $(a_1,a_2)$ is an action profile such that $W(a_1,a_2) < 1- \ubar{x}-2\beta$. If $(a_1,a_2)$ is in $\mathcal{R}_b$ then . Then, if player 1 moves, the next action profile $(a',b)$is guaranteed to have the payoff of at least $1-\ubar{x}-\beta$, due to lemma 1, which implies $W(a_1^1,b) \ge 1-\ubar{x}$ for any $b$. Thus, player 1 can never bring the system objective value lower than $1-\ubar{x}-\beta$. Further, when player 2 moves, even in the worst case scenario, it can not bring the system objective value to be less than $\max\{\max\{1-\ubar{x}, \ubar{x}\}-2\beta, \ubar{x}-\beta\}$ by the definition of $\ubar{x}$.

On the other hand, when $\max\{1-\ubar{x}, \ubar{x}\}-2\beta \le \ubar{x}-\beta$. Say there is an action profile $(a,b) \in \mathcal{R}^*_\beta$ such that $W(a,b) < \ubar{x}-\beta$. Then, if player 1 moves, the next action profile $(a',b)$is guaranteed to have the payoff of at least $\max\{1-\ubar{x}, \ubar{x}\}-\beta$, due to lemma 1 and the definition of $\ubar{x}$, which implies $W(a_1^1,b) \ge \max\{1-\ubar{x}, \ubar{x}\}-\beta \ge \ubar{x}-\beta$ for any $b$. Thus, player 1 can never bring the system objective value lower than $\ubar{x}-\beta$. Further, when player 2 moves, even in the worst case scenario, it can not bring the system objective value to be less than $\ubar{x}-\beta\}$ by the definition of $\ubar{x}$. Hence the proof.  \eop}


\hide{From  \eqref{eqn_low_min_entry}, we have $W(a_1^1,a_2) > 1-\ubar{x} = \frac{1}{2}+ \frac{\beta}{2}$ for all $(a,b) \in \mathcal{R}_\beta$. Further, we also have that,
\begin{eqnarray*}
    \max_{(a,b)\in \mathcal{R}_\beta} W(a,b) -\beta &>& 1-\ubar{x} = \frac{1}{2}+\frac{\beta}{2},\\
    \mbox{thus, }  \max_{(a,b)\in \mathcal{R}_\beta} W(a,b) &>& \frac{1}{2}+\frac{3}{2}\beta.
\end{eqnarray*}Thus the proof. \eop}

\hide{noindent\textbf{Proof of Proposition \ref{prop_lower_bound_prob}:} 
From the definition of $q_p^*$ and Theorem \ref{cor_rec_state}, we have
\begin{eqnarray*}
    E_{\pi^*_p} [W] &\ge& q_p^* + (1-q_p^*) \left(\frac{1}{2}- \frac{3\beta}{2}\right),\\
    & =  &\frac{1}{2}+\beta  + q_p^* \left(\frac{1}{2} + \frac{3\beta}{2}\right)  -\frac{5\beta}{2},\\
    &> & \frac{1}{2}+\beta\ \   \mbox{ when } q_p^*\  > \ \frac{5\beta}{1+3\beta}.
\end{eqnarray*} \eop}



\bibliographystyle{plain}        
\bibliography{vartikaLibrary,library}    

@inproceedings{Giannakopoulos2024,
archivePrefix = {arXiv},
arxivId = {2211.07547},
author = {Giannakopoulos, Yiannis and Grosz, Alexander and Melissourgos, Themistoklis},
booktitle = {51st International Colloquium on Automata, Languages, and Programming (ICALP 2024)},
doi = {10.4230/LIPIcs.ICALP.2024.72},
eprint = {2211.07547},
month = {jul},
title = {{On the Smoothed Complexity of Combinatorial Local Search}},
url = {http://arxiv.org/abs/2211.07547},
year = {2024}
}

@inproceedings{Singh2024,
author = {Singh, Vartika and Brown, Philip N.},
booktitle = {2024 IEEE 63rd Conference on Decision and Control (CDC)},
doi = {10.1109/CDC56724.2024.10886194},
isbn = {979-8-3503-1633-9},
month = {dec},
pages = {1745--1750},
publisher = {IEEE},
title = {{ABRA: An algorithm which cannot converge to low-quality Nash equilibria}},
url = {https://ieeexplore.ieee.org/document/10886194/},
year = {2024}
}

@article{Kandori1993,
author = {Kandori, Michihiro and Mailath, George J and Rob, Rafael},
doi = {10.2307/2951777},
isbn = {00129682},
issn = {0012-9682},
journal = {Econometrica},
number = {1},
pages = {29--56},
pmid = {2951777},
title = {{Learning, Mutation, and Long Run Equilibria in Games}},
volume = {61},
year = {1993}
}

@article{Kandori1995,
  title={Evolution of equilibria in the long run: A general theory and applications},
  author={Kandori, Michihiro and Rob, Rafael},
  journal={Journal of Economic Theory},
  volume={65},
  number={2},
  pages={383--414},
  year={1995},
  publisher={Elsevier}
}

@article{Young1993,
author = {Young, H. Peyton},
doi = {10.1007/sll229-006-9034-z},
isbn = {0792335198},
journal = {Econometrica},
number = {1},
pages = {57--84},
title = {{The Evolution of Conventions}},
volume = {61},
year = {1993}
}

@inproceedings{Vetta2002,
author = {Vetta, Adrian},
booktitle = {The 43rd Annual IEEE Symposium on Foundations of Computer Science, 2002. Proceedings.},
doi = {10.1109/SFCS.2002.1181966},
isbn = {0-7695-1822-2},
issn = {0272-5428},
pages = {416--425},
pmid = {21930273},
title = {{Nash equilibria in competitive societies, with applications to facility location, traffic routing and auctions}},
url = {http://ieeexplore.ieee.org/lpdocs/epic03/wrapper.htm?arnumber=1181966},
year = {2002}
}

@article{Monderer1996,
author = {Monderer, Dov and Shapley, Lloyd S.},
doi = {10.1006/game.1996.0044},
isbn = {0899-8256},
issn = {08998256},
journal = {Games and Economic Behavior},
number = {1},
pages = {124--143},
pmid = {17556540},
title = {{Potential Games}},
url = {http://www.sciencedirect.com/science/article/pii/S0899825696900445},
volume = {14},
year = {1996}
}

@article{Alos-Ferrer2010,
author = {Al{\'{o}}s-Ferrer, Carlos and Netzer, Nick},
doi = {10.1016/j.geb.2009.08.004},
isbn = {0899-8256},
issn = {08998256},
journal = {Games and Economic Behavior},
number = {2},
pages = {413--427},
publisher = {Elsevier Inc.},
title = {{The logit-response dynamics}},
url = {http://dx.doi.org/10.1016/j.geb.2009.08.004},
volume = {68},
year = {2010}
}

@INPROCEEDINGS{Singh2025,

  author={Singh, Vartika and Wesley, Will and Brown, Philip N.},

  booktitle={2025 American Control Conference (ACC)}, 

  title={Optimal Utility Design with Arbitrary Information Networks}, 

  year={2025},

  volume={},

  number={},

  pages={2895-2900},

  doi={10.23919/ACC63710.2025.11107761}}

@article{Paccagnan2020,
author = {Paccagnan, Dario and Chandan, Rahul and Marden, Jason R.},
doi = {10.1109/TAC.2019.2961995},
issn = {0018-9286},
journal = {IEEE Transactions on Automatic Control},
month = {nov},
number = {11},
pages = {4616--4631},
title = {{Utility Design for Distributed Resource Allocation—Part I: Characterizing and Optimizing the Exact Price of Anarchy}},
url = {https://ieeexplore.ieee.org/document/8941319/},
volume = {65},
year = {2020}
}

@article{Ferguson2021,
archivePrefix = {arXiv},
arxivId = {2102.09655},
author = {Ferguson, Bryce L. and Brown, Philip N. and Marden, Jason R.},
doi = {10.1109/TAC.2021.3088412},
eprint = {2102.09655},
issn = {0018-9286},
journal = {IEEE Transactions on Automatic Control},
month = {jun},
number = {6},
pages = {2729--2742},
title = {{The Effectiveness of Subsidies and Tolls in Congestion Games}},
url = {http://arxiv.org/abs/2102.09655 https://ieeexplore.ieee.org/document/9451652/},
volume = {67},
year = {2022}
}

@article{Seaton2023a,
author = {Seaton, Joshua H. and Brown, Philip N.},
doi = {10.1109/LCSYS.2023.3335315},
issn = {2475-1456},
journal = {IEEE Control Systems Letters},
pages = {3573--3578},
title = {{On the Intrinsic Fragility of the Price of Anarchy}},
url = {https://ieeexplore.ieee.org/document/10325647/},
volume = {7},
year = {2023}
}

@article{Qu,
  author    = {Qu, Guannan and Brown, Dave and Li, Na},
  title     = {Distributed greedy algorithm for multi-agent task assignment problem with submodular utility functions},
  journal   = {Automatica},
  volume    = {105},
  pages     = {206--215},
  year      = {2019}
}

@article{liu,
  author    = {Liu, Chunxia and Lu, Kaihong and Chen, Xiaojie and Szolnoki, Attila},
  title     = {Game-theoretical approach for task allocation problems with constraints},
  journal   = {Applied Mathematics and Computation},
  volume    = {458},
  pages     = {128251},
  year      = {2023}
}

@article{Brown,
  author    = {Brown, Philip N. and Seaton, Joshua H. and Marden, Jason R.},
  title     = {Robust networked multiagent optimization: designing agents to repair their own utility functions},
  journal   = {Dynamic Games and Applications},
  volume    = {13},
  number    = {1},
  pages     = {187--207},
  year      = {2023}
}

@article{Kordonis,
  author    = {Kordonis, Ioannis and Dessouky, Maged M. and Ioannou, Petros A.},
  title     = {Mechanisms for cooperative freight routing: Incentivizing individual participation},
  journal   = {IEEE Transactions on Intelligent Transportation Systems},
  volume    = {21},
  number    = {5},
  pages     = {2155--2166},
  year      = {2019}
}

@incollection{marden,
  author    = {Marden, Jason R. and Shamma, Jeff S.},
  title     = {Game theory and distributed control},
  booktitle = {Handbook of Game Theory with Economic Applications},
  volume    = {4},
  pages     = {861--899},
  publisher = {Elsevier},
  year      = {2015}
}

@article{martin,
  author    = {Martin, Javier G. and Muros, Francisco J. and Maestre, José M. and Camacho, Eduardo F.},
  title     = {Multi-robot task allocation clustering based on game theory},
  journal   = {Robotics and Autonomous Systems},
  volume    = {161},
  pages     = {104314},
  year      = {2023}
}

@inproceedings{marden_potential,
  author    = {Leslie, David S. and Marden, Jason R.},
  title     = {Equilibrium selection in potential games with noisy rewards},
  booktitle = {International Conference on NETwork Games, Control and Optimization (NetGCooP 2011)},
  pages     = {1--4},
  year      = {2011},
  organization = {IEEE}
}

@inproceedings{PoA,
  author    = {Papadimitriou, Christos},
  title     = {Algorithms, games, and the internet},
  booktitle = {Proceedings of the Thirty-Third Annual ACM Symposium on Theory of Computing},
  pages     = {749--753},
  year      = {2001}
}

@book{Levin,
  author    = {Levin, David A. and Peres, Yuval},
  title     = {Markov Chains and Mixing Times},
  volume    = {107},
  publisher = {American Mathematical Society},
  year      = {2017}
}

@book{Norris,
  author    = {Norris, J. R.},
  title     = {Markov Chains},
  number    = {2},
  publisher = {Cambridge University Press},
  year      = {1998}
}

@article{shapley,
  author    = {Monderer, Dov and Shapley, Lloyd S.},
  title     = {Potential games},
  journal   = {Games and Economic Behavior},
  volume    = {14},
  number    = {1},
  pages     = {124--143},
  year      = {1996}
}

\end{document}